\DeclareMathOperator{\E}{\mathbb{E}}
\DeclareMathOperator{\Pro}{\prod_{c \in \Phi \cap \mathbf{A_c}}}
\DeclareMathOperator{\rcst}{\sigma_{t_{avg}}}
\DeclareMathOperator{\rcsc}{\sigma_{c_{avg}}}
\DeclareMathOperator{\Ei}{\underset{\mathbf{\sigma_c},c \in \Phi \cap \mathbf{A_c}}{\E}}
\DeclareMathOperator{\Eii}{\underset{\sigma_c,\mathbf{A_c}}{\E}}
\DeclareMathOperator{\Su}{\sum_{c \in \Phi \cap \mathbf{A_c}}}
\DeclareMathOperator{\hxt}{\mathcal{H}(\kappa)}
\DeclareMathOperator{\hxc}{\mathcal{H}(\kappa_c)}
\DeclareMathOperator{\Pdc}{\mathcal{P}_{DC}^{Bi}}
\DeclareMathOperator{\thetat}{\mathbf{\theta_t}}
\DeclareMathOperator{\thetac}{\mathbf{\theta_c}}
\DeclareMathOperator{\longvar}{\rcst P_{tx}A_0\hxt}
\DeclareMathOperator{\twotheta}{\Delta \theta_{tx}\Delta \theta_{rx}}
\newtheorem{theorem}{Theorem}[]
\newtheorem{corollary}{Corollary}[theorem]
\newtheorem{definition}{Definition}
\begin{document}
\title{Estimation of Bistatic Radar Detection Performance Under Discrete Clutter Conditions Using Stochastic Geometry}
\author{Shobha~Sundar~Ram and Gourab~Ghatak\\ Indraprastha Institute of Information Technology Delhi, India}
\maketitle

\begin{abstract}
We propose a metric called the bistatic radar detection coverage probability to evaluate the detection performance of a bistatic radar under discrete clutter conditions. 
Such conditions are commonly encountered in indoor and outdoor environments where passive radars receivers are deployed with opportunistic illuminators. Backscatter and multipath from the radar environment give rise to ghost targets and point clutter responses in the radar signatures resulting in deterioration in the detection performance. In our work, we model the clutter points as a Poisson point process to account for the diversity in their number and spatial distribution. Using stochastic geometry formulations we provide an analytical framework to estimate the probability that the signal to clutter and noise ratio from a target at any particular position in the bistatic radar plane is above a predefined threshold. Using the metric, we derive key radar system perspectives regarding the radar performance under noise and clutter limited conditions; the range at which the bistatic radar framework can be approximated to a monostatic framework; and the optimal radar transmitted power and bandwidth. Our theoretical results are experimentally validated with Monte Carlo simulations.
\end{abstract}
\providecommand{\keywords}[1]{\textbf{\textit{Keywords--}}#1}
\begin{IEEEkeywords}
stochastic geometry, bistatic radar detection, Monte Carlo simulations, indoor clutter, Poisson point process
\end{IEEEkeywords}

\section{Introduction}
Historically, bistatic radars were researched for military applications - for tracking missiles, for spotting certain types of targets whose radar cross-section or Doppler information is enhanced in a bistatic/multi-static configuration or for detecting targets in a forward scattering scenario - when they interrupt the direct path between a radar transmitter and receiver~\cite{skolnik1961analysis, griffiths2003different, willis2005bistatic, davis2007advances}. However, there are some significant disadvantages in bistatic systems. First, the synchronization between the radar transmitter and receiver is challenging. Second, the view of the bistatic radar may be more limited than a monostatic radar since a target has to be visible to both the transmitting and the receiving antennas. Finally, the localization and Doppler estimation algorithms are significantly more complex in a bistatic scenario when compared to monostatic radar \cite{willis2005bistatic}. Despite these limitations, recently, there has been a revival of interest in the research and development of bistatic radar specifically in the context of microwave and millimeter wave integrated sensing and communication systems (ISAC). For example, radar sensing using wireless access points as opportunistic illuminators, with passive, cheap and undetectable receivers, have been researched for detecting, tracking and monitoring human activities in indoor environments \cite{falcone2012potentialities,li2020passive}. Another example is millimeter based vehicle-to-everything (V2X) communication systems that are increasingly incorporating sensing functionality within their communication protocols \cite{ali2020passive}. There are several advantages to these ISAC systems: first, dual functionality is realized with common hardware and spectrum resources \cite{duggal2019micro,duggal2020doppler}. Second, in some cases, the incorporation of the localization functionality facilitates improvement in communication metrics.
 
Indoor wireless and localization systems encounter significant discrete clutter scatterers such as furniture and walls with strong specular backscatter in the radar signatures. Further, they give rise to multipath off target backscatter which manifests as ghost targets in the radar signatures \cite{vishwakarma2020mitigation}. Similarly, in outdoor automotive scenarios, reflections from roads, railings and surrounding environments give rise to significant point clutter responses in the radar signatures. While there has been statistical analysis of bistatic radar detection performance due to surface clutter \cite{al2011statistical}, there is little research into quantifying the performance due to discrete clutter since there can be considerable variation in the environmental conditions.
In this work, we propose a metric called bistatic radar detection coverage probability ($\Pdc$) to quantify the radar detection performance based on the signal to clutter and noise ratio (SCNR) using stochastic geometry (SG) principles. $\Pdc$ specifies the fraction of targets in the bistatic region-of-interest which have SCNR above a predefined threshold. In \cite{ram2020estimating,ram2021optimization} we used SG to derive the monostatic radar detection coverage probability to quantify the performance of a monostatic radar under discrete clutter conditions. The metric accounted for wide variation in the radar, target and clutter conditions and provided system perspectives on radar deployment and optimization of bandwidth, pulse width, maximum power and gains of  transmitting and receiving antennas. In this work, we use SG principles to model the distribution of the discrete clutter scatterers as a Poisson point process (PPP) similar to \cite{chen2012integrated}. The PPP is a natural choice since it is a completely random process  where each clutter point is stochastically independent to the location of all the other clutter points. Thus it accounts for the diversity in the number and spatial distribution of the clutter scatterers around the radar. With this model, we consider each radar deployment as an independent instance of an underlying spatial stochastic process. The PPP assumption leads to elegant expressions of the SCNR and enables the radar operator to dimension the deployments and derive useful insights not only throughout different spatial locations in a particular deployment, but also across different deployments with modified parameters. 

We use the classical bistatic radar geometry framework proposed by \cite{jackson86} where the bistatic range is identified as the geometric mean ($\kappa$) of the one-way propagation distances between the radar transmitter and receiver with respect to the target. An important important parameter in bistatic radar is the baseline length ($L$) between the transmitter and receiver. In our work, we derive the SCNR for scenarios where $L\leq 2\kappa$ by considering the returns from a single target and from clutter scatterers within the same clutter resolution cell as the target. We consider two types of clutter resolution cells - beamwidth limited clutter resolution cell which is the region in the bistatic plane within the main lobes of both the transmitting and receiving antennas; and the range limited resolution cell which is the region defined by the beamwidth of either antenna and the range resolution bin. Based on the SCNR, our main contribution in this paper is an analytical framework that quantifies the radar performance as a function of radar parameters (such as beamwidth, transmitted power, bandwidth), clutter parameters (clutter distribution, and clutter cross-section) and target parameters (target cross-section). Second, we provide key system insights into radar performance under noise and clutter limited conditions; the maximum power at which detection performance saturates; the optimal radar bandwidth for best detection performance; and the bistatic radar range as a function of baseline length, at which the bistatic radar approximates to monostatic radar. We experimentally validate the analytical results with Monte Carlo simulations. The paper is organized as follows. In the following section, we present the bistatic radar geometry followed by theorems to compute $\Pdc$ for both beamwidth limited and range limited radar resolution cells along with their corollaries. Then in Section.\ref{sec:Results}, we present the radar detection performance as a function of radar, target and clutter parameters. Finally, we conclude the paper with a summary of the insights gained by the study in Section.\ref{sec:Conclusion}

\emph{Notation:} We have used the following notation in this paper. Random variables are indicated by bold face fonts while regular parameters and realizations of random variables are written using regular font. 
\section{Theory}
\label{sec:Theory}
In this paper, we follow the North referenced bistatic geometry configuration popularized by Jackson \cite{jackson86}. We consider a two-dimensional Cartesian coordinate space with the bistatic radar transmitter and receiver at $(\pm\frac{L}{2},0)$ respectively as shown in Fig.\ref{fig:BistaticRadarSetup}. We consider a target at ($r_t,\thetat$) where the distance between the target and the transmitter and receiver are $R_{tx}$ and $R_{rx}$ respectively and the bistatic radar angle is $\beta$. We assume that the target is at a fixed bistatic range $\kappa$ with respect to the radar where $\kappa$ is the geometric mean of $R_{tx}$ and $R_{rx}$ ($\kappa = \sqrt{R_{tx}R_{rx}}$). The angular position of the target, $\thetat$, is a uniform random variable between $[0,2\pi)$. Note that the locus of all the points of $\theta_t$ from $[0,2\pi)$ for a fixed $\kappa$ and $L$ forms a Cassini oval  \cite{willis2005bistatic} as shown in Fig.\ref{fig:CassiniOval}. 
\begin{figure}[htbp]
\begin{subfigure}[b]{0.49\textwidth}
         \centering
         \includegraphics[width=0.99\textwidth]{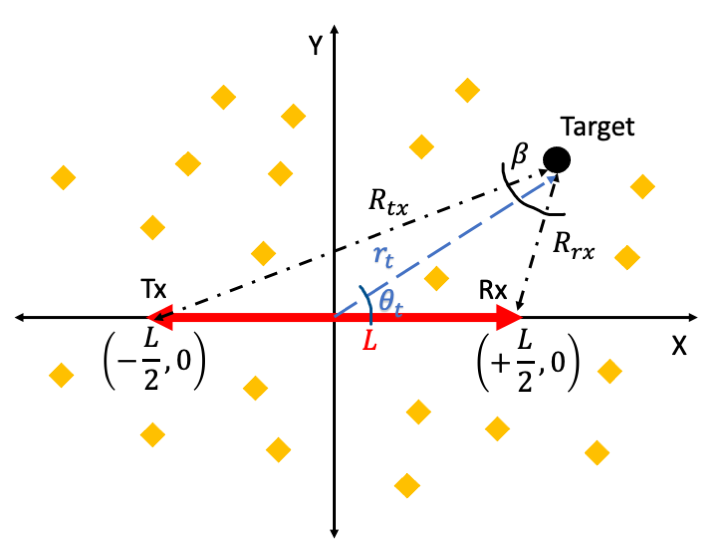}
         \caption{}
         \label{fig:BistaticRadarSetup}    
\end{subfigure}
\hfill
\begin{subfigure}[b]{0.49\textwidth}
         \centering
         \includegraphics[width=0.99\textwidth]{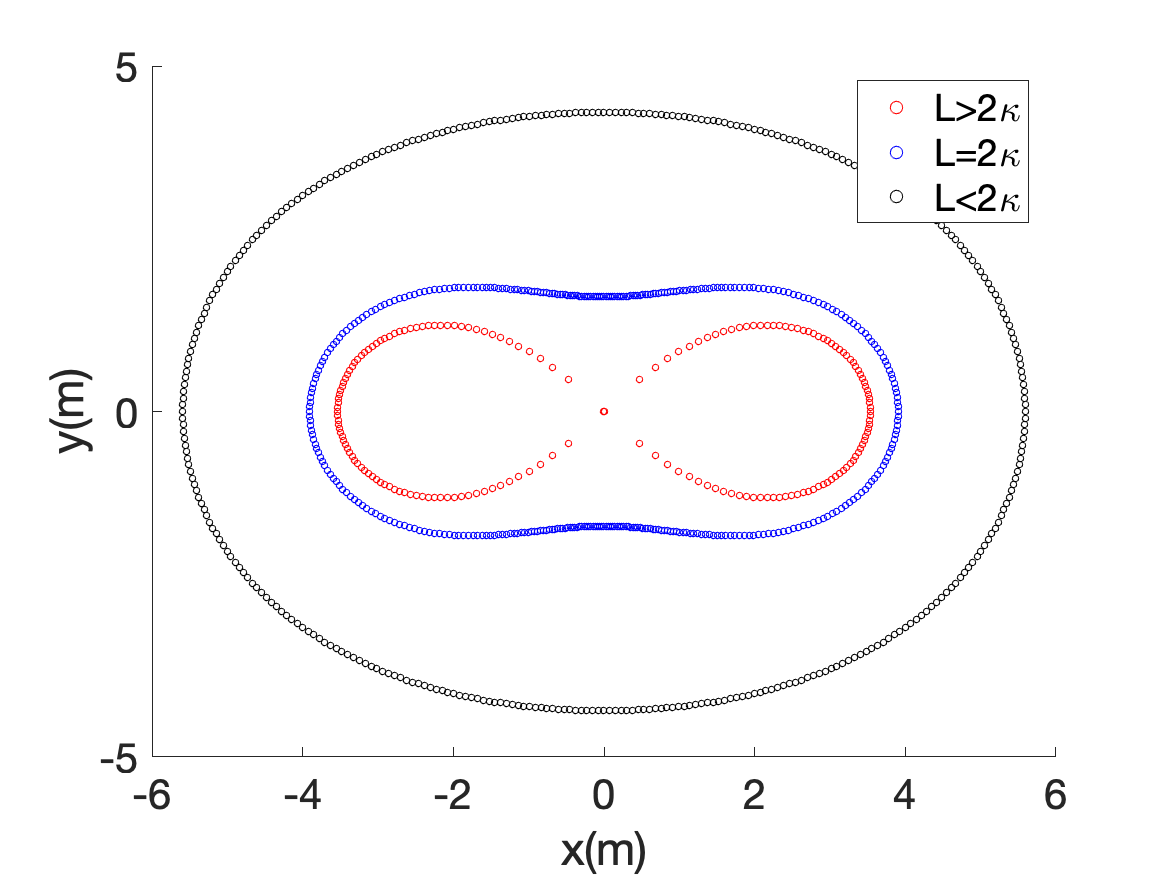}
         \caption{}
         \label{fig:CassiniOval}    
\end{subfigure}
\caption{Histograms of sine of bistatic angle ($\sin \beta$) and minimum normalized one-way range ($R_{min}/\kappa$).}
\vspace{-0.1cm}
\end{figure}
There are three cases: the co-site region ($L < 2\kappa$), the lemiscate region ($L = 2 \kappa$) and when $L > 2\kappa$, the oval breaks into two parts, centered around either transmitter (when $R_{rx}>>R_{tx}$) or the  receiver (when $R_{tx}>>R_{rx}$). We will consider the first two cases in this work. 
For a fixed $\kappa$ and baseline length ($L$) and a specific realization of $\theta_t$, the polar radial position of the target $r_t$ is determined by solving the quadratic equation from the triangle law of cosines given by \par\noindent\small
\begin{align}
\label{eq:SolveForr_t}
    \left(r_t^2+\frac{L^2}{4}\right)^2 - r_t^2L^2\cos^2\theta_t = \kappa^4.
\end{align}
Then we can determine $R_{tx}$, $R_{rx}$ and  $\beta$ from the equations below -\par\noindent\small
\begin{align}
\label{eq:SolveForRt1}
R_{tx}^2 &= \left(r_t^2+\frac{L^2}{4}\right) +r_tL\cos\theta_t\\
\label{eq:SolveForRt2}
R_{rx}^2 &= \left(r_t^2+\frac{L^2}{4}\right) -r_tL\cos\theta_t \\
\label{eq:SolveForRt3}
\cos \beta &= \frac{\left(R_{tx}^2+R_{rx}^2-L^2\right)}{2\kappa^2}.
\end{align}
Some comments regarding the bistatic radar geometry: First, note that when $L=0$, we obtain $R_{tx} = R_{rx} = \kappa$, and the bistatic radar geometry becomes a monostatic case and the Cassini ovals become concentric circles. Second, when $\theta_t = 0$ and for low values of $\kappa$, we obtain the forward scattering scenario when the target is directly between the transmit and receiver. Third, we get the maximum $\beta$ for $\theta_t=\frac{\pi}{2}$ when $R_{tx} = R_{rx}$. Based on equations \eqref{eq:SolveForr_t} to \eqref{eq:SolveForRt3}, we obtain \par\noindent\small
\begin{align}
\label{eq:sinBetaMax}
\sin \beta^{max} = \sqrt{\frac{L^2}{\kappa^2}-\frac{L^4}{4\kappa^4}} 
\end{align}
When we consider the co-site scenario where $L < 2 \kappa$, then the above expression for $\sin \beta^{max} \approx \frac{L}{\kappa}$. On the other hand, in the lemiscate scenario where $L = 2\kappa$, \eqref{eq:sinBetaMax} becomes $\sqrt{3}$. In the third scenario when $L>2\kappa$, we cannot obtain $\beta^{max}$ since the Cassini ovals break into two separate regions as discussed earlier.  
\subsection{Signal to Clutter and Noise Ratio (SCNR)}
We assume that the transmitted power from the radar is $P_{tx}$ and the gains of the transmitting and receiving antennas are $G_{tx}(\theta)$ and $G_{rx}(\theta)$ with half-power beamwidths, $\Delta\theta_{tx}$ and $\Delta \theta_{rx}$ respectively. 
We assume that the target is in the far-field of both the transmitter and receiver. Then the two-way propagation from the bistatic radar transmitter to the target and then back to the receiver is a function of the product of $R_{tx}$ and $R_{rx}$ and indicated as $\hxt$. In the simplest scenario, where there is line-of-sight (LOS) propagation, 
$\hxt$ is \par\noindent\small
\begin{align}
\label{eq:hxt}
    \hxt = \frac{\lambda^2}{(4\pi)^3 R_{tx}^2R_{rx}^2}= \frac{\lambda^2}{(4\pi)^3 \kappa^{4}},
\end{align}
where $\lambda$ is the wavelength corresponding to the carrier frequency.
Then the received power from the target is given by the bistatic radar range equation given by \par\noindent\small
\begin{align}
\label{eq:TgtSignal}
    \textbf{S}(\kappa) = P_{tx}G_{tx}(\thetat)G_{rx}(\thetat)\mathbf{\sigma_t}\hxt,
\end{align}
where $\mathbf{\sigma_t}$ is the bistatic radar cross-section. 
We assume that the target is within the main lobe of both the transmitter and receiver antennas and that the gain pattern is uniform within the lobes. The maximum gains of the transmitting and receiving lobes are inversely proportional to their respective beamwidths $G_{tx}^{max}G_{rx}^{max} = \frac{A_0}{\twotheta}$. The constant of proportionality, $A_0$, is directly related to factors related to the impedance matching efficiency, dielectric and conductor loss and the aperture efficiency and typically determined through empirical studies. Thus, \eqref{eq:TgtSignal} can be written as \par\noindent\small
\begin{align}
\label{eq:TgtSignal2}
    \textbf{S}(\kappa) = P_{tx}G_{tx}^{max}G_{rx}^{max}\mathbf{\sigma_t}\hxt = \frac{P_{tx}A_0\mathbf{\sigma_t}\hxt}{\twotheta}.
\end{align}
Studies have shown that the bistatic $\sigma_t$ follows a similar distribution to monostatic radar cross-section \cite{skolnik1961analysis} which are typically expressed in terms of one of four Swerling models. Specifically, microwave and millimeter wave targets such as humans and vehicles are modeled as Swerling-1 models since they are slow moving with several scattering centers of similar strength \cite{raynal2011radar,Raynal2011RCS}. The Swerling-1 model follows an exponential distribution with a mean radar cross-section $\rcst$ as shown below. \par\noindent\small
\begin{align}
\label{eq:TargetRCS}
  \mathcal{P}(\sigma_t) = \frac{1}{\rcst}exp \left(\frac{-\sigma_t}{\rcst} \right).
\end{align}
In the bistatic radar plane, we consider discrete clutter scatterers, located at $(r_c,\theta_c)$, that represent both direct reflections and multipath from environmental artefacts. For example, in an indoor environment, the target is the human while the multipath from the walls, ceilings and furniture  give rise to ghost targets and clutter. Based on the room geometry and the movement of the human, there can be considerable variation in the strength, spatial distribution and the number of point clutter responses. Similarly, in automotive radar, the multipath reflections from guard railings and surrounding buildings give rise to discrete point clutter responses in the radar data. Again, there is considerable randomness in the strength and spatial distribution of these point clutter responses.
Hence, we model the distribution of these clutter scatterers as a Poisson point process (PPP) - where the number of clutter scatterers in each realization follows a Poisson distribution and the positions of these scatterers follow a uniform distribution. This assumption is consistent with those adopted in previous works \cite{chen2012integrated,ram2020estimating,ram2021optimization}.
The generalized Weibull model \cite{sekine1990weibull} has been frequently used to describe the distribution of the radar cross-section of clutter. Here the distribution is a function of the average bistatic radar cross-section, $\rcsc$ and the shape parameter, $\alpha$ as shown in \par\noindent\small
\begin{align}
\label{eq:ClutterRCS}
 \mathcal{P}(\sigma_c) = \frac{\alpha}{\rcsc}\left(\frac{\sigma_c}{\rcsc}\right)^{\alpha-1}\exp\left(-\left(\frac{\sigma_c}{\rcsc}\right)^{\alpha} \right),
\end{align}
The shape parameter can vary from zero - corresponding to Rayleigh distribution, to one corresponding to exponential distribution and is determined from empirical studies. In scenarios, where the reflectivity of some scatterers dominate the others resulting in strong discrete clutter characteristics, the distribution is most similar to an exponential form \cite{goldstein1973false}. Therefore, in this work we assume that $\mathcal{P}(\sigma_c)$ follows \par\noindent\small
\begin{align}
\label{eq:ClutterRCS2}
 \mathcal{P}(\sigma_c) = \frac{1}{\rcsc}\exp\left(-\frac{\sigma_c}{\rcsc}\right).
\end{align}
We, further, assume that the noise, target and clutter statistics do not change appreciably during the coherent processing interval of the radar. These conditions are generally met for microwave or millimeter-wave radars \cite{billingsley2002low,ruoskanen2003millimeter}. 
In order to determine the clutter returns at the bistatic radar, we specifically consider those clutter scatterers that fall within the same radar resolution cell as the target. There are two types of resolution cells as shown in the figure below: the beamwidth limited resolution cell corresponding to the two-dimensional region that falls within the main lobes of both the transmitting and receiving antennas; and the range limited cell for a radar waveform of $\Delta \tau$ pulse width. The former's cell size for a target at  $(r_t,\theta_t)$ is given by \par\noindent\small
\begin{align}
\label{eq:BWResCell}
    A_{c_{\theta}} = \frac{R_{tx} \Delta \theta_t R_{rx} \Delta \theta_r}{\sin \beta} = \frac{\kappa^2 \twotheta}{\sin \beta}.
\end{align}
For fixed $\kappa$ and antenna beamwidths, the area resolution cell is inversely proportional to $\beta$.
\begin{figure}[htbp]
\begin{subfigure}[b]{0.49\textwidth}
         \centering
         \includegraphics[width=0.99\textwidth]{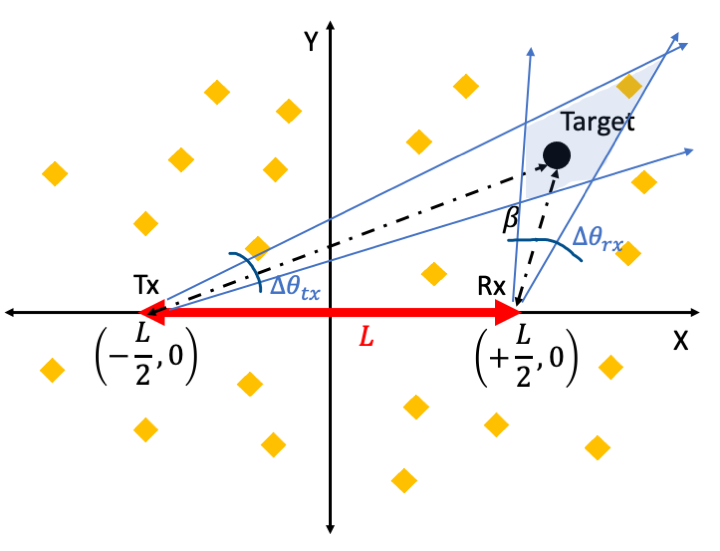}
         \caption{}
         \label{fig:Actheta}    
\end{subfigure}
\hfill
\begin{subfigure}[b]{0.49\textwidth}
         \centering
         \includegraphics[width=0.99\textwidth]{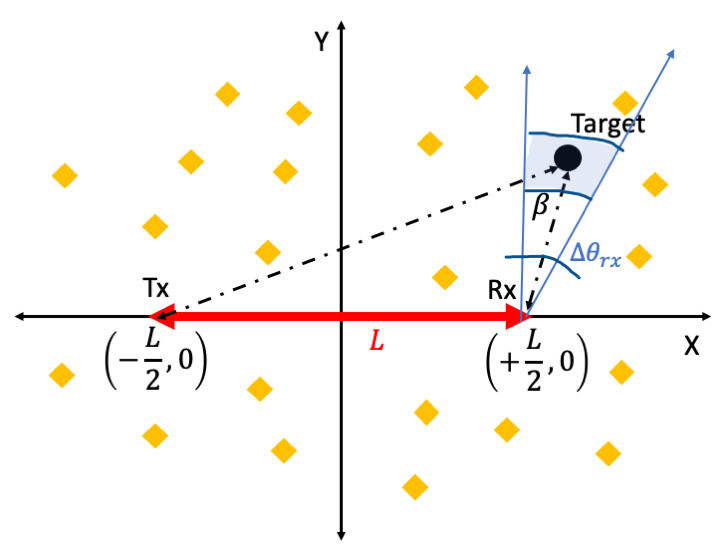}
         \caption{}
         \label{fig:Acr}    
\end{subfigure}
\caption{(a) Beamwidth limited clutter resolution cell ($A_{c_{\theta}}$)and (b) range limited clutter resolution cell ($A_{c_r}$)}
\vspace{-0.1cm}
\end{figure}
In other words, in the limiting scenario when the bistatic radar system resembles a monostatic case (when $\beta = 0$), the beams from both the transmitter and receiver are parallel and hence intersection area or $A_c$ becomes infinite.
The latter's cell size is given by \par\noindent\small
\begin{align}
\label{eq:RangeResCell}
    A_{c_{r}} = \frac{c\Delta \tau R_{min} \Delta \theta_{rx}}{2\cos^2 \beta},
\end{align}
where $R_{min}$ is the minimum of $R_{tx}$ and $R_{rx}$.
All of the clutter scatterers within the same resolution cell as the target give rise to clutter returns, $\mathbf{C}$, at the radar receiver given by \par\noindent\small
\begin{align}
\label{eq:CluttSignal}
    \mathbf{C}(\kappa) = \Su P_{tx}G_{tx}(\thetac)G_{rx}(\thetac)\mathbf{\sigma_c}\hxc.
\end{align}
Here $\theta_c$ and $\kappa_c$ are the polar angle and bistatic range of each clutter scatterer respectively and $\mathbf{A_c}$ is the bistatic radar resolution cell that could correspond to either \eqref{eq:BWResCell} or \eqref{eq:RangeResCell}. Since $\mathbf{\theta_t}$ is a random variable, $\mathbf{A_c}$ which is a function of $\mathbf{\theta_t}$ is also a random variable. For a given noise of the radar receiver, $N_s = K_BT_sBW$ where $K_B, T_s$ and $BW$ are the Boltzmann constant, system noise temperature and bandwidth respectively, the signal to clutter and noise ratio is given by $\mathbf{SCNR}(\kappa) = \frac{\mathbf{S}(\kappa)}{\mathbf{C}(\kappa)+N_s}$
\subsection{Bistatic radar detection coverage probability}
Classical radar detection framework is based on binary hypothesis testing derived from the Neyman-Pearson (NP) theorem.
Here, the basic problem is to decide, on the basis of the radar receiver measurement, between the \emph{alternative hypothesis} ($\mathcal{H}_1$) where the target echo is present along with noise and clutter in the measurement, and the \emph{null hypothesis} ($\mathcal{H}_0$) where only noise and clutter are present in the measurement. Based on the decisions, there are three possible outcomes - a true detection, a missed detection when the signal along with noise and clutter falls below the threshold and a false alarm when the clutter and noise fall above the threshold. The separation between the PDFs of $\mathcal{H}_0$ and $\mathcal{H}_1$ is a function of the mean signal to clutter and noise ratio (SCNR). In other words, when the SCNR is high, it is far easier to obtain a high probability of detection while maintaining a low probability of false alarm.
In this work, we propose a new metric called the radar detection coverage probability that is directly based on the SCNR in the region-of-interest. It is analogous to the radar detection coverage probability that was derived for monostatic radar scenarios in our previous works \cite{ram2020estimating,ram2021optimization}.
\begin{definition}
The bistatic radar detection coverage probability ($\Pdc$) is defined as the probability that the SCNR for a single target at a bistatic range ($\kappa$) is above a predefined threhold, $\gamma$ i.e $\Pdc(\kappa) \triangleq \mathbb{P}\left(\mathbf{SCNR}(\kappa) \geq \gamma \right)$
\end{definition}
\begin{theorem}
The $\Pdc$ for a given bistatic range $\kappa$ where $L < 2\kappa$ corresponding to co-site conditions, with exponential clutter statistics, for a beamlimited radar resolution cell, for a fixed transmitted power and transmitter and receiver beamwdiths, is given by \par\noindent\small
\begin{align}
\label{eq:theorem1}
    \Pdc &= exp\left(-\frac{\gamma N_s\twotheta}{\longvar}-\frac{\rho\kappa^3\twotheta\gamma\rcsc}{L(\rcst+\gamma\rcsc)}\right)
\end{align}
\end{theorem}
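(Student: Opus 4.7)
The plan is to reduce $\Pdc$ to a Laplace functional by exploiting the exponential distribution of $\mathbf{\sigma_t}$ and then to evaluate the resulting clutter expectation via the probability generating functional (PGFL) of the clutter PPP. Starting from $\Pdc=\mathbb{P}(\mathbf{S}(\kappa)/(\mathbf{C}(\kappa)+\Noise)\ge\gamma)$ and substituting \eqref{eq:TgtSignal2}, I would isolate $\mathbf{\sigma_t}$ so that the event becomes $\{\mathbf{\sigma_t}\ge s(\mathbf{C}(\kappa)+\Noise)\}$ with $s=\gamma\twotheta/(P_{tx}A_0\hxt\rcst)$. Using the exponential target RCS \eqref{eq:TargetRCS} to compute the complementary CDF and then taking expectations over the clutter field gives
\begin{align*}
\Pdc=\exp(-s\Noise)\,\E\!\left[\exp\!\left(-s\,\mathbf{C}(\kappa)\right)\right],
\end{align*}
whose deterministic prefactor already reproduces the first term in the exponent of \eqref{eq:theorem1}.

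Next I would evaluate the clutter Laplace transform. Because every scatterer inside the beamwidth-limited cell $\mathbf{A_c}$ sees the peak gains $G_{tx}^{max}G_{rx}^{max}=A_0/\twotheta$, and because $\kappa_c\approx\kappa$ within the cell so that $\hxc\approx\hxt$, each summand of \eqref{eq:CluttSignal} carries a common prefactor that cancels $s$ and leaves $\exp(-\gamma\sigma_c/\rcst)$ inside the product over the PPP. Averaging over the exponential clutter RCS in \eqref{eq:ClutterRCS2} gives the Laplace transform $\rcst/(\rcst+\gamma\rcsc)$, which is independent of position, so the PGFL of the homogeneous PPP of intensity $\rho$ collapses to
\begin{align*}
\E\!\left[\exp(-s\,\mathbf{C}(\kappa))\right]=\exp\!\left(-\rho\,|\mathbf{A_c}|\,\frac{\gamma\rcsc}{\rcst+\gamma\rcsc}\right),
\end{align*}
thereby reducing the problem to determining $|\mathbf{A_c}|$.

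The remaining, and most delicate, step is that $|\mathbf{A_c}|=\kappa^2\twotheta/\sin\beta$ from \eqref{eq:BWResCell} depends on the random $\thetat$ through the bistatic angle $\beta$, so in principle another expectation would be required. In the co-site regime $L<2\kappa$, however, $\sin\beta^{\max}\approx L/\kappa$ by \eqref{eq:sinBetaMax}, and substituting this worst-case value yields $|\mathbf{A_c}|\approx\kappa^3\twotheta/L$. Plugging this into the clutter exponent recovers the second term of \eqref{eq:theorem1} and completes the derivation. I expect the main obstacle to be justifying this substitution for $\sin\beta$, together with verifying that the in-cell approximations $\hxc\approx\hxt$ and $G_{tx}(\thetac)G_{rx}(\thetac)\approx G_{tx}^{max}G_{rx}^{max}$ are internally consistent with the beamwidth-limited resolution-cell definition; once these modelling choices are accepted, the rest is a routine application of the PPP Laplace functional.
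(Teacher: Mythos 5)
Your proposal is correct and follows essentially the same route as the paper: condition on the clutter field, use the exponential target RCS to obtain the noise factor times a clutter Laplace functional, apply the PPP PGFL with $\hxc\approx\hxt$, and replace $\sin\beta$ by $\sin\beta^{\max}\approx L/\kappa$ (which the paper justifies empirically via a Monte Carlo histogram showing $\sin\beta$ concentrates at $\sin\beta^{\max}$, rather than as a ``worst-case'' bound --- note that $\sin\beta^{\max}$ actually \emph{minimizes} the cell area $\kappa^2\twotheta/\sin\beta$, so it is the optimistic end). The delicate modelling steps you flag are exactly the ones the paper also treats as approximations.
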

\begin{proof}
For a bistatic radar with a target at bistatic range $\kappa$, the $\mathbf{SCNR}$ is a function of several random variables such as the target cross-section, the position of target, the number and spatial distribution of the discrete clutter scatterers and their radar cross-section as shown below. \par\noindent\small
\begin{align}
\label{eq:SCNR1}
\mathbf{SCNR}(\kappa) 
&= \frac{P_{tx}A_0\mathbf{\sigma_t}\hxt/\twotheta}{\Su P_{tx}A_0\mathbf{\sigma_c}\hxc/\twotheta+N_s} 
\\
&= \frac{\mathbf{\sigma_t}}{\Su\frac{\mathbf{\sigma_c}\hxc}{\hxt}+\frac{N_s\twotheta}{P_{tx}A_0\hxt}}
\end{align}
Since $\Pdc = \mathbb{P}\left(\mathbf{SCNR}(\kappa) \geq \gamma\right)$, we can write \eqref{eq:SCNR1} as \par\noindent\small
\begin{align}
\label{eq:Pdc1}
\mathcal{P}\left(\mathbf{\sigma_t}\geq \Su\frac{\gamma\mathbf{\sigma_c}\hxc}{\hxt}+\frac{\gamma N_s\twotheta}{P_{tx}A_0\hxt}\right).
\end{align}
As per \eqref{eq:TargetRCS}, $\mathbf{\sigma_t}$ follows the Swerling-1 model. Hence, \eqref{eq:Pdc1} can be expanded to \par\noindent\small
\begin{align}
\Pdc&=exp \left(\Su\frac{-\gamma\mathbf{\sigma_c}\hxc}{\rcst\hxt}-\frac{\gamma N_s \twotheta}{\longvar} \right) 
\\
\label{eq:ShortPdc}
&=exp \left( \frac{-\gamma N_s \twotheta}{\longvar}\right) I(\kappa).
\end{align}
In the above expression, we have separated the $\Pdc$ into two terms. The first expression which is entirely a function of constants shows the effect of signal to noise ratio on the performance of the bistatic radar, while the second term $I(\kappa)$ shows the effect of signal to clutter ratio on the performance of the system. Since the exponent of sum of terms can be written as a product of exponents, $I(\kappa)$ can be written as \par\noindent\small
\begin{align}
\label{eq:I1}
I = \Ei\left[\Pro exp \left(\frac{-\gamma\mathbf{\sigma_c}}{\rcst} \right)\right],
\end{align}
where $\mathbf{A_c}$, the radar range resolution cell, is a function of $\kappa$ and the position of the target $\mathbf{\theta_t}$. Also, due to the close proximity of the clutter points to the target, we have assumed that $\hxt \approx \hxc$. Based on the principles of stochastic geometry, we use the probability generating functional (PGFL) of a homogeneous PPP \cite{haenggi2012stochastic}, to rewrite, $I(\kappa)$ as \par\noindent\small
\begin{align}
\label{eq:I2}
I &=exp\left(-\underset{\mathbf{\sigma_c},c}{\E}\left[\iint_{\mathbf{r_c,\phi_c}} \rho\left(1- exp\left(\frac{-\gamma\mathbf{\sigma_c}}{\rcst}\right)\right) d(\vec{x}_c) \right]\right) 
\\
&=exp\left(-\Eii\left[\left(1- exp\left(\frac{-\gamma\mathbf{\sigma_c}}{\rcst}\right)\right) \rho \mathbf{A_c}\right]\right).
\end{align}
The integral in \eqref{eq:I2} has to be evaluated in the region of beamwidth limited clutter resolution cell $A_{c_{\theta}}$ as shown in Fig.\ref{fig:Actheta}. If we assume that the clutter scatter statistics are constant within that cell, then the integral can be further reduced to \par\noindent\small
\begin{align}
I= exp\left(-\Eii\left[\left(1- exp\left(\frac{-\gamma\mathbf{\sigma_c}}{\rcst}\right)\right) \rho \frac{\kappa^2\twotheta}{\sin \mathbf{\beta}}\right]\right)
\end{align}
Now the angular position of the target $\mathbf{\theta_t}$ is a uniform random variable from $[0,2\pi)$. We perform a Monte Carlo simulation to study the distribution of the corresponding $\sin \beta$ for 10000 trials of $\theta_t$ for different ratios of $\kappa/L$. The resulting distribution is shown in Fig.\ref{fig:HistogramBeta} where we observe that the $\sin \beta$ takes the value of $\sin \beta^{max}$ with a high probability. Hence, we can further reduce it to \par\noindent\small
\begin{align}
\label{eq:I3}
I &= exp\left(-\underset{\mathbf{\sigma_c}}{\E}\left[1- exp\left(\frac{-\gamma\mathbf{\sigma_c}}{\rcst}\right)\right]  \frac{\rho\kappa^2\twotheta}{\sin\beta^{max}}\right) 
\\
\label{eq:I4}
&= exp\left(-\underset{\mathbf{\sigma_c}}{\E}\left[1- exp\left(\frac{-\gamma\mathbf{\sigma_c}}{\rcst}\right)\right] \Upsilon(\kappa)\right)
\end{align}
Since $\Upsilon(\kappa)$ is a constant independent of $\sigma_c$, it can be pulled out of the integral for computing the expectation as shown below. \par\noindent\small
\begin{align}
\label{eq:Jrc}
I(\kappa)
=exp\left(-\Upsilon(\kappa)\int_0^{\infty}\left(1- exp\left(\frac{-\gamma\mathbf{\sigma_c}}{\rcst}\right)\right)\mathcal{P}(\sigma_c)d\sigma_c \right)
\\
\label{eq:Jrc11}
= exp\left(-\frac{\Upsilon(\kappa)\gamma \rcsc}{\rcst+\gamma \rcsc} \right) = exp\left(-\frac{\rho\kappa^3\twotheta\rcsc}{L(\rcst+\gamma\rcsc)}\right)
\end{align}
The integral in \eqref{eq:Jrc} integrates to 
\eqref{eq:Jrc11}. We substitute $I(\kappa)$ in \eqref{eq:ShortPdc} to prove the theorem in \eqref{eq:theorem1}.
\end{proof}
Now, we discuss the key insights obtained from the theorem. The $\Pdc$ consists of an exponential expression consisting of two parts. The first part shows the effect of the signal to noise ratio while the second part shows the effect of the signal to clutter ratio. 
In other words, when clutter is absent ($\rho =0$) we obtain the performance of the radar solely under noise limited conditions. Similarly when $N_s = 0$, we obtain solely the effect of clutter on the radar performance.\\
\emph{Noise Limited Conditions:}
We will first discuss the performance of the radar under noise-limited conditions. The $\Pdc$ increases with increase in transmitted power, $P_{tx}$; reduced radar antenna beamwidths $\twotheta$; and reduced radar receiver noise. 
Further, as expected, the increase in target's bistatic range causes the performance to deteriorate. In LOS conditions, this deterioration is at the rate of $\kappa^4$. The increase in the mean target cross-section, on the other hand, improves the performance. The findings from this expression are consistent from those obtained from the Frii's bistatic radar range equation and not very dissimilar from the monostatic radar conditions.\\
\emph{Clutter Limited Conditions:}
Next, we discuss the second term where we study how the presence of discrete clutter scatterers affects the performance of the radar.
First, we note that, unlike the noise limited conditions, the performance is independent of the transmitted power since an increase in $P_{tx}$ causes a proportionate increase in both target and clutter returns. Second, the reduction in the radar antenna beamwidths, $\twotheta$, improves the detection performance in a similar manner to the noise limited conditions. This is because the clutter resolution cell size reduces for lower beamwidths resulting in lower returns. 
In LOS conditions, the $\Pdc$ deteriorates at the rate of $\kappa^3$ instead of $\kappa^4$ (as is the case of noise limited conditions). This is similar to the monostatic radar scenarios where the signal to clutter ratios deteriorated at the rate of the third power of the monostatic range. 
Next, the farther the target is from the radar, the effect of the clutter returns begin to dominate over the effect of noise due to the increase in the clutter resolution cell size. 
\begin{corollary}
\label{corr:corr1}
In LOS conditions, the bistatic range, $\tilde{\kappa}$, at which the transition between the noise limited radar performance to clutter limited radar performance occurs at
\begin{align}
\label{eq:corr1}
    \tilde{\kappa}= \frac{\rho \rcsc \rcst P_{tx}A_0\lambda^2}{L(\rcst+\gamma\rcsc)N_s}.
\end{align}
We prove the above corollary by equating the two terms within the exponent in \eqref{eq:theorem1} at $\tilde{\kappa}$. Further, we assume LOS propagation factor for $\hxt$, from \eqref{eq:hxt}, as shown below.
\begin{align}
    \frac{\gamma N_s\twotheta (4\pi)^3 \tilde{\kappa}^4 }{\rcst P_{tx}A_0 \lambda^2}=\frac{\rho \tilde{\kappa}^3 \twotheta \rcsc}{L(\rcst+\gamma\rcsc)}
\end{align}
\end{corollary}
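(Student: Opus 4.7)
The plan is to interpret the transition bistatic range $\tilde{\kappa}$ as the value at which the noise contribution and the clutter contribution to the exponent in the $\Pdc$ expression of Theorem~1 are equal in magnitude. Since the logarithm of $\Pdc$ in \eqref{eq:theorem1} splits cleanly as a sum of a noise-only term and a clutter-only term, the crossover between the noise-limited and clutter-limited regimes is naturally captured by setting those two exponent contributions equal at $\kappa = \tilde{\kappa}$.

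Concretely, I would first isolate the two summands in \eqref{eq:theorem1}: the noise term $\gamma N_s \twotheta/(\rcst P_{tx} A_0 \hxt)$ and the clutter term $\rho \kappa^3 \twotheta \gamma \rcsc/(L(\rcst + \gamma \rcsc))$. The clutter term already exposes its $\kappa^3$ scaling explicitly, whereas the noise term only depends on $\kappa$ implicitly through $\hxt$. Thus the next step is to invoke the LOS channel model \eqref{eq:hxt}, which gives $\hxt = \lambda^2/((4\pi)^3 \kappa^4)$, so that the noise exponent scales as $\kappa^4$. Setting the two resulting expressions equal at $\tilde{\kappa}$ yields an algebraic equation in a single unknown, with matching factors of $\twotheta$ on both sides and (since both terms contain a $\gamma$ in the numerator) matching factors of $\gamma$ as well.

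Finally, I would cancel these common factors and divide through by $\tilde{\kappa}^3$, reducing the relation to $\tilde{\kappa}^1 \propto \rho \rcsc \rcst P_{tx} A_0 \lambda^2 / \bigl(L(\rcst + \gamma \rcsc) N_s\bigr)$, which is precisely \eqref{eq:corr1} up to the $(4\pi)^3$ geometric-propagation constant. There is really no hard step here; the proof is purely bookkeeping. The only subtlety to be careful about is that the equality of the two exponent terms is a definitional criterion for the crossover (rather than something derived from a deeper principle), so the main content of the statement is the scaling structure in $\rho$, $\lambda$, $P_{tx}$, $L$, and $N_s$ that emerges after the cancellation, and I would emphasize that in the write-up.
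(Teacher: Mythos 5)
Your proposal is correct and follows essentially the same route as the paper: equate the noise and clutter terms in the exponent of \eqref{eq:theorem1}, substitute the LOS form of $\hxt$ from \eqref{eq:hxt} so the noise term scales as $\tilde{\kappa}^4$, cancel the common $\gamma$ and $\twotheta$ factors, and divide by $\tilde{\kappa}^3$. You also correctly note that this derivation leaves a $(4\pi)^3$ factor in the denominator that the paper's stated expression \eqref{eq:corr1} omits.
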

\begin{corollary}
\label{corr:corr2}
Converse to the previous corollary, the maximum power beyond which there will be no further improvement in the radar detection performance due to the transition from noise limited to clutter limited conditions for a fixed bistatic range is
\begin{align}
\label{eq:corr2}
    P_{tx}^{max} = \frac{L\kappa(\rcst+\gamma\rcsc)N_s}{\rho \rcsc \rcst A_0\lambda^2}
\end{align}
\end{corollary}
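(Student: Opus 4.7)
The plan is to mirror the argument used for Corollary~\ref{corr:corr1}, but solve the equality for $P_{tx}$ instead of $\kappa$. First I would observe the essential structural feature of \eqref{eq:theorem1}: the exponent of $\Pdc$ splits into a noise contribution that scales as $1/P_{tx}$ and a clutter contribution that is entirely independent of $P_{tx}$. Consequently, as $P_{tx} \to \infty$, the noise term vanishes and $\Pdc$ saturates at $\exp\bigl(-\rho\kappa^{3}\twotheta\gamma\rcsc/[L(\rcst+\gamma\rcsc)]\bigr)$. This monotone approach to a clutter-determined ceiling is what makes a ``maximum useful power'' well defined: beyond the point where the noise term becomes negligible compared to the clutter term, additional transmitted power buys no further increase in $\Pdc$.

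Next I would formalize that crossover by equating the two exponent contributions, exactly as in Corollary~\ref{corr:corr1}, but now at the bistatic range $\kappa$ that is being held fixed. That is, I would substitute the LOS propagation factor $\hxt = \lambda^{2}/[(4\pi)^{3}\kappa^{4}]$ from \eqref{eq:hxt} into the noise term and write
\begin{align*}
\frac{\gamma N_s\twotheta(4\pi)^{3}\kappa^{4}}{\rcst P_{tx}A_0\lambda^{2}}
= \frac{\rho\kappa^{3}\twotheta\gamma\rcsc}{L(\rcst+\gamma\rcsc)}.
\end{align*}
The factors $\gamma$ and $\twotheta$ cancel from both sides, one power of $\kappa$ cancels from the ratio $\kappa^{4}/\kappa^{3}$, and rearranging for $P_{tx}$ yields the expression in \eqref{eq:corr2} (up to the $(4\pi)^{3}$ constant absorbed into $A_0$ as in \eqref{eq:TgtSignal2}).

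Since the algebra is immediate, there is no genuine technical obstacle; the proof is essentially a rearrangement of the same balance equation used in Corollary~\ref{corr:corr1}. The only point requiring care is the interpretational one, namely justifying why the equality of the two exponent terms is the appropriate operational definition of $P_{tx}^{max}$. I would defend this by noting that for $P_{tx} \gg P_{tx}^{max}$ the noise exponent is negligible relative to the clutter exponent, so $\Pdc$ differs from its asymptotic clutter-limited value by at most a factor that decays rapidly in $P_{tx}/P_{tx}^{max}$; conversely, for $P_{tx} \ll P_{tx}^{max}$ the noise term dominates and $\Pdc$ responds strongly to increases in $P_{tx}$. Thus $P_{tx}^{max}$ precisely marks the transition between the regime in which boosting transmit power materially improves detection and the regime in which it does not.
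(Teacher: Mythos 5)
Your proposal matches the paper's own (implicitly stated) argument: the paper proves this corollary exactly as the "converse" of Corollary~\ref{corr:corr1}, i.e.\ by equating the noise and clutter terms in the exponent of \eqref{eq:theorem1} under the LOS substitution for $\hxt$ and solving the same balance equation for $P_{tx}$ rather than $\kappa$. Your handling of the $(4\pi)^3$ factor is also consistent with the paper, which drops it from \eqref{eq:corr1} and \eqref{eq:corr2} alike, so there is nothing further to add.
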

As the ratio $\kappa/L$ increases, the bistatic radar scenario begins to strongly resemble the monostatic radar scenario. As a result the resolution cell becomes larger. In fact, in the limiting scenario when $\kappa/L$ tends to infinity, the beams of the transmitter and receiver antennas are parallel and the resolution cell is infinite in size. At this point, it is more beneficial to consider other types of clutter resolution cells instead of the beamwidth resolution cell. In current literature, there is no specific value of $\kappa/L$ to mark the transition from bistatic radar to monostatic radar. 
\begin{corollary}
\label{corr:corr3}
In the absence of noise, the bistatic range, $\overline{\kappa}$, at which the transition from bistatic to monostatic radar scenario occurs when 
\begin{align}
\label{eq:corr3}
    \overline{\kappa} = \left(\frac{L(\rcst+\gamma\rcsc)}{\rho\twotheta\rcsc} \right)^{1/3}.
\end{align}
We show the above corollary by identifying $\overline{\kappa}$ as the value at which $\Pdc$ reduces to 36.7\% ($e^{-1}$) of the maximum possible value as shown below
\begin{align}
exp\left(-\frac{\rho(\overline{\kappa})^3\twotheta\rcsc}{L(\rcst+\gamma\rcsc)}\right) = exp(-1)
\end{align}
\end{corollary}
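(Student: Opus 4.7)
The plan is to piggyback on Theorem~1 and isolate the clutter-only contribution to $\Pdc$. First I would start from the expression proved in Theorem~1 and take the noise-free limit $N_s \to 0$, which kills the first summand inside the exponent and leaves
\begin{align*}
\Pdc \Big|_{N_s=0} = \exp\!\left(-\frac{\rho\kappa^3\twotheta\rcsc}{L(\rcst+\gamma\rcsc)}\right).
\end{align*}
The supremum of this quantity over $\kappa$ equals $1$ (attained in the limit of small bistatic range, where the beamwidth-limited cell collapses), so the maximum possible value of $\Pdc$ in this regime is normalized to one, and the natural candidate for a ``transition scale'' is the value of $\kappa$ at which the noise-free $\Pdc$ has decayed to $e^{-1}$ of this supremum.

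Next I would impose the defining condition stated in the corollary, namely
\begin{align*}
\exp\!\left(-\frac{\rho\,\overline{\kappa}^{\,3}\twotheta\rcsc}{L(\rcst+\gamma\rcsc)}\right) = e^{-1},
\end{align*}
take logarithms to reduce this to the cubic $\rho\,\overline{\kappa}^{\,3}\twotheta\rcsc = L(\rcst+\gamma\rcsc)$, and then solve for $\overline{\kappa}$ by taking the positive real cube root. The result is precisely \eqref{eq:corr3}, so the algebraic core of the proof is completely routine once the $e^{-1}$ convention is adopted.

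The main conceptual obstacle is therefore not the calculation itself but the justification that the $e^{-1}$ level set is a meaningful surrogate for the ``bistatic-to-monostatic'' crossover. I would defend this choice on two grounds: (i) it is the standard decay-constant convention used throughout propagation and coverage analysis, and (ii) it corresponds exactly to the bistatic range at which the dimensionless clutter contribution inside the exponent of Theorem~1 first reaches unity, which is a natural threshold since beyond this scale the beamwidth-limited cell size $A_{c_\theta} = \kappa^2\twotheta/\sin\beta$ has grown large enough, relative to the baseline $L$, that the transmit and receive beams are effectively parallel and the beamwidth-limited clutter cell ceases to be a useful notion. At that point one should switch to the range-limited cell $A_{c_r}$ of \eqref{eq:RangeResCell}, which is the motivation stated in the paragraph preceding the corollary. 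With this justification in place, the proof reduces to the one-line algebraic manipulation above.
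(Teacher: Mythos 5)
Your proposal is correct and takes essentially the same route as the paper: the paper likewise defines $\overline{\kappa}$ by setting the clutter-only exponent of Theorem~1 equal to $-1$ (i.e.\ $\Pdc$ falls to $e^{-1}$ of its maximum) and solves the resulting cubic, with your added defense of the $e^{-1}$ convention being consistent with the paper's surrounding discussion of the resolution cell blowing up as $\kappa/L$ grows. One minor observation: your noise-free expression omits the factor $\gamma$ that appears in the numerator of the clutter term in \eqref{eq:theorem1}; this matches the paper's own corollary and its equation \eqref{eq:Jrc11}, so the discrepancy is an internal inconsistency of the paper rather than a flaw in your argument, but it is worth flagging since carrying the $\gamma$ through would change \eqref{eq:corr3} to $\overline{\kappa} = \left(L(\rcst+\gamma\rcsc)/(\rho\twotheta\gamma\rcsc)\right)^{1/3}$.
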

Increase in the density of the clutter returns, $\rho$, gives causes a deterioration in $\Pdc$ and hence the radar detection performance. 
However, the same is not true for the mean clutter cross-section $\rcsc$.
\begin{corollary}
\label{corr:corr4}
For large values of $\rcsc$, the radar detection performance metric, $\Pdc$, is independent of the clutter cross-section as shown below.
\begin{align}
\label{eq:corr4}
    \lim_{\rcsc\to\infty} exp\left(-\frac{\rho\kappa^3\twotheta\rcsc}{L(\rcst+\gamma\rcsc)}\right) = exp\left(-\frac{\rho\kappa^3\twotheta}{L\gamma}\right)
\end{align}
\end{corollary}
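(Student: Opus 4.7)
The plan is to exploit continuity of the exponential function and reduce the claim to a one-line limit computation on a rational expression in $\rcsc$. Since $\exp(\cdot)$ is continuous on $\mathbb{R}$, it suffices to show that
\begin{align*}
\lim_{\rcsc \to \infty} \frac{\rho \kappa^3 \twotheta \, \rcsc}{L(\rcst + \gamma \rcsc)} \;=\; \frac{\rho \kappa^3 \twotheta}{L \gamma},
\end{align*}
after which pulling the limit inside the $\exp$ delivers the right-hand side of \eqref{eq:corr4} directly.

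First I would isolate the only factor depending on $\rcsc$, namely $\frac{\rcsc}{\rcst + \gamma \rcsc}$, and note that all of $\rho$, $\kappa$, $\twotheta$, and $L$ are constants in this limit that may be pulled outside. Next I would divide numerator and denominator of this factor by $\rcsc$ to rewrite it as $\frac{1}{\rcst/\rcsc + \gamma}$, which makes the $\rcsc \to \infty$ behaviour transparent: the term $\rcst/\rcsc$ vanishes while $\gamma$ is fixed (and positive since it is a detection threshold and the clutter RCS is positive), so the quotient tends to $1/\gamma$. Multiplying back by the pulled-out constants gives the claimed limit of the exponent, and continuity of $\exp$ then yields \eqref{eq:corr4}.

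There is no serious obstacle here; the only point that merits a sentence in the write-up is justifying why the mean target cross-section $\rcst$ becomes negligible compared with $\gamma \rcsc$, which is simply the standard observation that a bounded quantity divided by one tending to infinity is $o(1)$. The physical reading I would add is that once the clutter scatterers are, on average, much more reflective than the target, further increases in $\rcsc$ scale the target and clutter returns by proportional amounts inside the ratio defining $\mathbf{SCNR}(\kappa)$, so $\Pdc$ saturates at a value determined only by the threshold $\gamma$ and the geometric/clutter-density factors $\rho$, $\kappa$, $\twotheta$, $L$.
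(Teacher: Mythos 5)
Your proof is correct and follows essentially the same (implicit) route as the paper, which treats the corollary as an immediate consequence of $\frac{\rcsc}{\rcst+\gamma\rcsc}\to\frac{1}{\gamma}$ inside the continuous exponential. The paper offers no further argument beyond stating the limit, so your explicit justification (dividing through by $\rcsc$ and noting $\rcst/\rcsc\to 0$) is exactly the intended computation.
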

\begin{theorem}
The probability of detection coverage for a given bistatic range $\kappa$ where $L = 2\kappa$ corresponding to lemiscate conditions, with exponential clutter statistics, for a beamlimited radar resolution cell, for a fixed transmitted power and transmitter and receiver beamwdiths, is given by \par\noindent\small
\begin{align}
\label{eq:theorem2}
    \Pdc &= exp\left(-\frac{\gamma N_s\twotheta}{\longvar}-\frac{\rho\kappa^2\twotheta\gamma\rcsc}{\sqrt{3}(\rcst+\gamma\rcsc)}\right)
\end{align}
\end{theorem}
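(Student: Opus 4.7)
The plan is to mirror the proof of Theorem~1 almost step-by-step, since only the bistatic geometry (specifically $\sin\beta^{max}$) differs between the co-site and lemiscate regimes. First I would write the $\mathbf{SCNR}(\kappa)$ for a target at bistatic range $\kappa$ exactly as in \eqref{eq:SCNR1}, noting that the expressions for the target signal $\mathbf{S}(\kappa)$ from \eqref{eq:TgtSignal2} and the clutter signal $\mathbf{C}(\kappa)$ from \eqref{eq:CluttSignal} hold regardless of the relation between $L$ and $\kappa$, as long as the beamwidth-limited resolution cell formula \eqref{eq:BWResCell} is applicable.

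Next I would invoke the Swerling-1 assumption \eqref{eq:TargetRCS} to carry out the same expectation over $\mathbf{\sigma_t}$, producing the factorisation of $\Pdc$ into the noise-limited exponent $\exp\bigl(-\gamma N_s \twotheta/\longvar\bigr)$ and the clutter term $I(\kappa)$ as in \eqref{eq:ShortPdc}. The noise-limited factor is unchanged from Theorem~1 because it does not depend on $L$ or on the resolution cell area. I would then apply the PGFL of the PPP exactly as in \eqref{eq:I1}--\eqref{eq:I2} to get
\begin{align*}
I = \exp\!\left(-\Eii\!\left[\left(1-\exp\!\left(\frac{-\gamma\mathbf{\sigma_c}}{\rcst}\right)\right)\rho\frac{\kappa^2\twotheta}{\sin\mathbf{\beta}}\right]\right).
\end{align*}

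The single substantive change enters here. For $L=2\kappa$, equation \eqref{eq:sinBetaMax} gives $\sin\beta^{max}=\sqrt{3}$ exactly, rather than the small-angle approximation $L/\kappa$ used in the co-site case. Using the same Monte Carlo observation that $\sin\beta$ concentrates near $\sin\beta^{max}$, I would replace $\sin\beta$ by $\sqrt{3}$ to obtain $\Upsilon(\kappa) = \rho\kappa^2\twotheta/\sqrt{3}$. The remaining expectation over $\mathbf{\sigma_c}$ with the exponential clutter law \eqref{eq:ClutterRCS2} proceeds identically to \eqref{eq:Jrc}--\eqref{eq:Jrc11}, yielding $I(\kappa)=\exp\!\bigl(-\rho\kappa^2\twotheta\gamma\rcsc/[\sqrt{3}(\rcst+\gamma\rcsc)]\bigr)$. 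Combining with the noise factor gives exactly \eqref{eq:theorem2}.

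The main obstacle, although minor, is the justification for replacing $\sin\mathbf{\beta}$ by $\sqrt{3}$ when $L=2\kappa$. In the co-site case, the concentration of $\sin\beta$ near $\sin\beta^{max}$ is intuitively clear because the Cassini oval is close to a circle and $\beta$ is typically small; for the lemiscate case the oval degenerates and the geometry is more singular near $\theta_t=0$, so I would need to revisit the histogram argument of Fig.~\ref{fig:HistogramBeta} or argue that the set of $\theta_t$ giving much smaller $\sin\beta$ has small measure, so that the expectation is still well-approximated by the value at $\sin\beta^{max}$. Once this approximation is accepted, everything else is an algebraic copy of Theorem~1 with the single replacement $L/\kappa \mapsto \sqrt{3}$, which converts the $\kappa^3/L$ scaling of the clutter exponent into the $\kappa^2/\sqrt{3}$ scaling stated in \eqref{eq:theorem2}.
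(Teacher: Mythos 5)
Your proposal matches the paper's proof essentially verbatim: the paper also proves this theorem by repeating the steps of Theorem~1 from \eqref{eq:SCNR1} through \eqref{eq:I4} and then substituting $\sin\beta^{max}=\sqrt{3}$ from \eqref{eq:sinBetaMax} in the lemiscate case to obtain $\Upsilon(\kappa)$ and hence \eqref{eq:theorem2}. Your added caveat about justifying the concentration of $\sin\mathbf{\beta}$ near $\sin\beta^{max}$ when the oval degenerates is a fair observation, but the paper relies on the same Monte Carlo histogram argument and does not treat this case separately.
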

\begin{proof}
 The proof for this theorem follows the identical steps of the previous theorem from \eqref{eq:SCNR1} to \eqref{eq:I4}. In the final step, since $\sin \beta^{max} = \sqrt{3}$, we obtain
 \begin{align}
 \label{eq:Ups2}
     \Upsilon(\kappa)=\frac{\rho \kappa^2\twotheta\gamma\rcsc}{\sqrt{3}(\rcst+\gamma\rcsc)}
 \end{align}
 Substituting \eqref{eq:Ups2} in \eqref{eq:I4}, we prove \eqref{eq:theorem2}.
\end{proof}
The performances of the radar under noise limited and clutter limited conditions in lemiscate scenario do not differ significantly from the co-site scenario. Hence, we do not include a separate discussion on the effect of radar, target and clutter parameters on the detection performance.  
\begin{theorem}
The probability of detection coverage for a given bistatic range $\kappa$ and antenna beamwidth $\Delta \theta_{tx}=\Delta \theta_{rx} = \Delta \Theta$ and $L \leq 2\kappa$ (co-site/lemiscate regions) for a  range limited radar resolution cell is given by 
\begin{align}
\label{eq:theorem3}
\begin{split}
\Pdc = exp \left( \frac{-\gamma N_s\twotheta}{\longvar}\right) \\exp\left(-\frac{\rho c \Delta \tau \Delta \Theta \kappa^2\rcsc}{2(\kappa_m+\sqrt{\kappa_m^2-L^2})(\rcst+\gamma \rcsc)}\right)
\end{split}
\end{align}
\end{theorem}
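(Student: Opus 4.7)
The plan is to mirror the proof of Theorem 1 line by line through the application of the probability generating functional, since everything from \eqref{eq:SCNR1} up to \eqref{eq:I4} is independent of the specific shape of the clutter resolution cell. The first step is therefore to reuse that derivation: writing the target return and clutter sum with the common factor $\hxt$, pulling out the noise exponent $\exp(-\gamma N_s\twotheta/\longvar)$, applying the Swerling-1 expectation over $\sigma_t$, and invoking the PGFL of the homogeneous PPP. This yields a generic clutter factor of the form
\begin{align*}
I(\kappa) = \exp\!\left(-\Eii\!\left[\left(1-\exp\!\left(\tfrac{-\gamma\mathbf{\sigma_c}}{\rcst}\right)\right)\rho\mathbf{A_c}\right]\right),
\end{align*}
where only the substitution of $\mathbf{A_c}$ differs from Theorem 1.

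Next, I would plug in the range-limited cell from \eqref{eq:RangeResCell}, namely $A_{c_r}=c\Delta\tau R_{min}\Delta\Theta/(2\cos^2\beta)$, in place of the beamwidth-limited $A_{c_\theta}$. Because both $R_{min}$ and $\cos\beta$ depend on the random target angle $\thetat$, the resulting cell is itself a random variable. Following the same concentration argument that was used in Theorem 1 (there applied to $\sin\beta$ via the histogram in Fig.~\ref{fig:CassiniOval}), I would appeal to the companion histogram of $R_{min}/\kappa$ to replace the random ratio $R_{min}/\cos^2\beta$ by its representative deterministic value, expressed in terms of $\kappa$ and $L$. Computing this representative value reduces to a straightforward application of \eqref{eq:SolveForr_t}–\eqref{eq:SolveForRt3}: once $R_{tx}$ and $R_{rx}$ are written in closed form at the concentration angle, $R_{min}$ and $\cos\beta$ follow, and conjugate-rationalizing the resulting surd should collapse the ratio into the compact form $\kappa^2/[\kappa_m+\sqrt{\kappa_m^2-L^2}]$ that appears in \eqref{eq:theorem3}.

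Once $\mathbf{A_c}$ has been replaced by this deterministic function of $\kappa$ and $L$, it factors out of the $\sigma_c$-expectation exactly as $\Upsilon(\kappa)$ did in \eqref{eq:I4}. The remaining clutter-RCS integral is identical to \eqref{eq:Jrc}–\eqref{eq:Jrc11}, so with the exponential clutter PDF \eqref{eq:ClutterRCS2} it evaluates to $\gamma\rcsc/(\rcst+\gamma\rcsc)$. Combining this with the noise exponent from the opening step, and using $\Delta\theta_{tx}\Delta\theta_{rx}=\Delta\Theta^{2}$ wherever appropriate, should reproduce \eqref{eq:theorem3} directly.

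The main obstacle is the geometric identification in the second step: unlike Theorem 1, where the concentration argument acted on a single quantity $\sin\beta$, here we must argue that the \emph{ratio} $R_{min}/\cos^2\beta$ concentrates cleanly enough to be replaced by a single deterministic value, and we must then do the trigonometric bookkeeping so that the value matches the $\kappa^2/[\kappa_m+\sqrt{\kappa_m^2-L^2}]$ form claimed. A secondary but easier concern is checking that the approximation $\hxt\approx\hxc$ remains justified when the resolution cell is now elongated along range, so that the pre-factor $\hxt$ still cancels out of the clutter-to-signal ratio as it did in Theorem 1.
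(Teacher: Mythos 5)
Your proposal follows essentially the same route as the paper: reuse the Theorem 1 derivation through the PGFL step, substitute the range-limited cell $A_{c_r}$, invoke the Monte Carlo concentration of $R_{min}/\kappa$ near $1$ and of $\beta$ near $\beta^{max}$ to render the cell deterministic, and then evaluate the exponential clutter expectation. The trigonometric bookkeeping you flag as the main obstacle is likewise left implicit in the paper, which simply asserts the closed form of $A_{c_r}$ (with the undefined quantity $\kappa_m$), so there is no substantive divergence between your approach and the published proof.
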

\begin{proof}
We begin with similar steps to the proof of the first theorem by considering steps \eqref{eq:SCNR1} to \eqref{eq:I2}. Instead of considering the beamwidth limited resolution cell, we consider the range limited resolution cell given in \eqref{eq:RangeResCell}. If we assume that the radar beamwidths for the transmitter and receiver are equal $\Delta \theta_{tx} = \Delta \theta_{rx} = \Delta\Theta$, then $A_{c_r}$ is a function of the minimum one-way range, $R_{min}:min(R_{tx},R_{rx})$. In order to find $R_{min}$, we perform a Monte Carlo simulation with 2000 trials where the polar angle $\theta_t$ is drawn from a uniform distribution from $[0,2\pi)$ and compute the corresponding $R_{min}$ normalized by $\kappa$. The resulting distribution of the normalized $R_{min}$ is presented in Fig.\ref{fig:HistogramRmin}. 
\begin{figure}[htbp]
\begin{subfigure}[b]{0.49\textwidth}
         \centering
         \includegraphics[width=\textwidth]{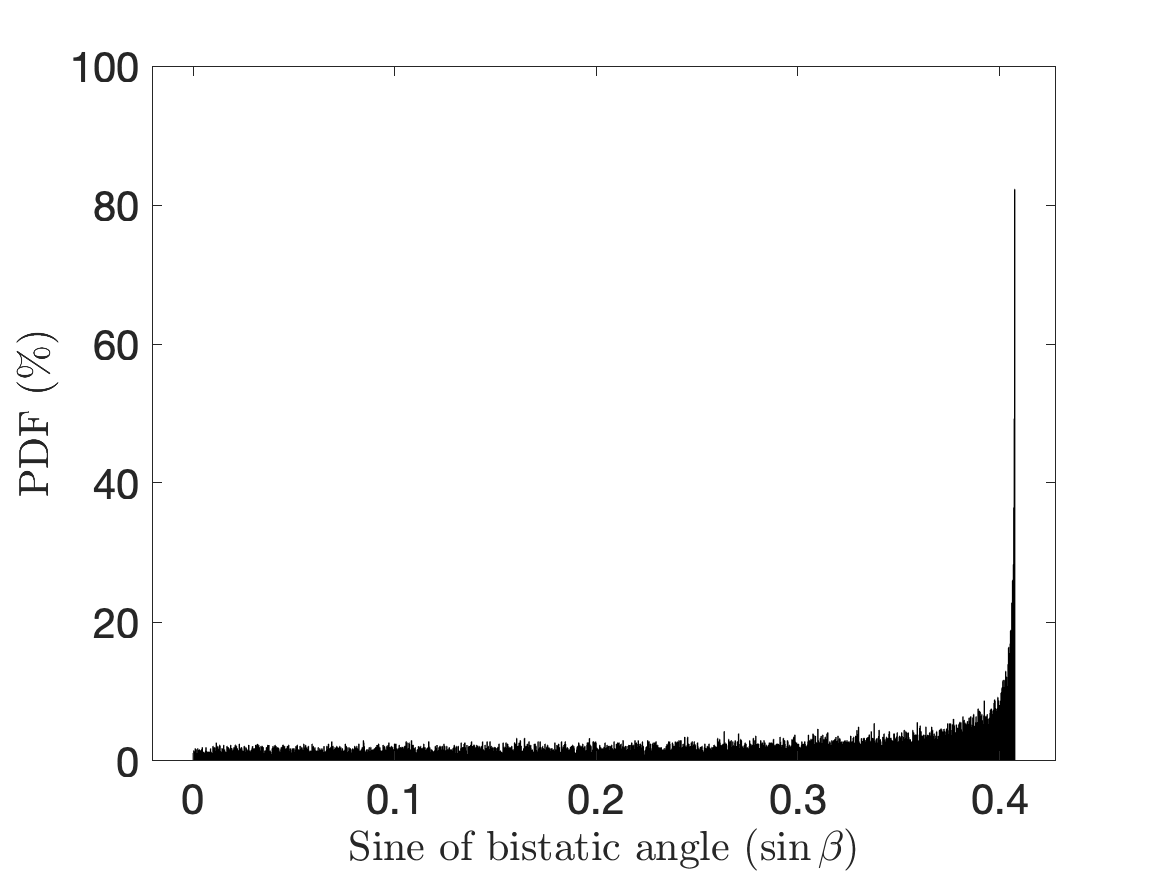}
         \caption{}
         \label{fig:HistogramBeta}    
\end{subfigure}
\hfill
\begin{subfigure}[b]{0.49\textwidth}
         \centering
         \includegraphics[width=\textwidth]{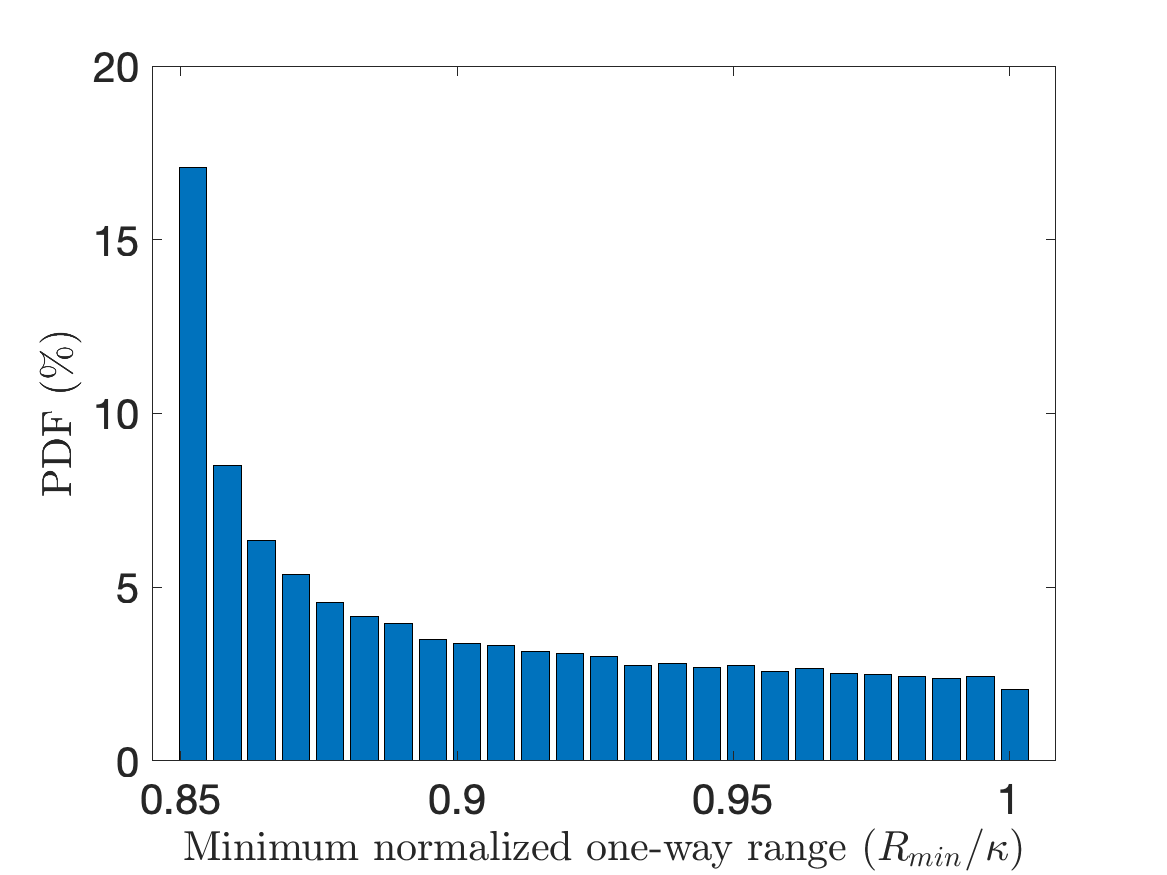}
         \caption{}
         \label{fig:HistogramRmin}    
\end{subfigure}
\caption{Histograms of sine of bistatic angle ($\sin \beta$) and minimum normalized one-way range ($R_{min}/\kappa$).}
\vspace{-0.1cm}
\end{figure}
The simulation result shows that $R_{min}$ varies very closely with respect to $\kappa$ ranging from 0.95 to 1. Hence, without loss of accuracy, we use $\kappa$ in the expression for $A_{c_r}$. Further based on the previous study of the distribution of $\sin\beta$ presented in Fig.\ref{fig:HistogramBeta}, we use $\cos\beta^{max}$ also for $A_{c_r}$. The resulting expression for $A_{c_r}$ is given by
\begin{align}
\label{eq:Acr}
A_{c_r} = \frac{c\Delta \tau \Delta \Theta \kappa^2}{2(\kappa_m+\sqrt{\kappa_m^2-L^2})}.
\end{align}
Substituting this in \eqref{eq:I2}, we obtain
\begin{align}
\label{eq:I5}
I = exp\left(\underset{\mathbf{\sigma_c}}{\E} \left[1-exp\left(\frac{-\gamma\mathbf{\sigma_c}}{\rcst}\right)\frac{\rho c\Delta \tau \Delta \Theta \kappa^2}{2(\kappa_m+\sqrt{\kappa_m^2-L^2})}\right]\right)
\end{align}
For an exponential distribution of $\sigma_c$, we obtain
\begin{align}
\label{eq:JRC}
I = exp\left(-\frac{\rho c \Delta \tau \Delta \Theta \kappa^2\rcsc}{2(\kappa_m+\sqrt{\kappa_m^2-L^2})(\rcst+\gamma \rcsc)}\right).
\end{align}
Substituting this back in \eqref{eq:ShortPdc}, we prove \eqref{eq:theorem3}.
\end{proof}
Similar to the corollaries drawn from Theorem 1, we can derive $\tilde{\kappa}$ where the radar detection scenario transitions from noise limited conditions to clutter limited conditions; and the $P_{tx}^{max}$ which is the maximum power beyond which there will be no further improvement in the radar performance due to high clutter returns. Also, as shown previously in Corollary 1.4, when $\rcsc$ is high, there is no further deterioration in the performance of the radar. Instead, the detection is affected more by the clutter density $\rho$ . However, there is a unique inference for conditions of the radar bandwidth in the range limited scenario. For high values of radar bandwidth, the noise power at the radar receiver increases due to  $N_s = K_B T_sBW^{max}$. However, the clutter returns reduce since $A_{c_r}$ is a function of the pulse width $\Delta \tau$ which is inversely proportional to bandwidth. 
\begin{corollary}
\label{corr:corr5}
The radar bandwidth for optimal detection performance for an impulse radar with a pulse width of $\Delta \tau = 1/BW^{max}$ and noise power is given by
\begin{align}
\label{eq:corr5}
BW^{max} = \left(\frac{\rho c \rcsc\rcst P_{tx}A_0 \hxt \kappa}{2(1-\frac{L^2}{\kappa^2})(\rcst+\gamma\rcsc)\Delta \Theta K_BT_s}\right)^{1/2}
\end{align}
\end{corollary}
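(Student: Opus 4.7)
The plan is to treat $\Pdc$ from Theorem 3 as a function of $BW$ alone (all other radar, target and clutter parameters fixed) and locate its unique interior maximum. After substituting $N_s = K_B T_s BW$ and $\Delta\tau = 1/BW^{max}$ into \eqref{eq:theorem3}, the exponent splits into one term proportional to $BW$ (from the noise) and one term proportional to $1/BW$ (from the clutter resolution cell, since $A_{c_r}\propto \Delta\tau$). Writing $-\ln\Pdc = a\,BW + b/BW$ with
\begin{align*}
a = \frac{\gamma K_B T_s \twotheta}{\rcst P_{tx}A_0 \hxt}, \qquad b = \frac{\rho c \Delta\Theta\,\kappa^2 \rcsc}{2(\kappa_m+\sqrt{\kappa_m^2-L^2})(\rcst+\gamma\rcsc)},
\end{align*}
maximising $\Pdc$ is equivalent to minimising $a\,BW + b/BW$ over $BW>0$.

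Next I would apply either the AM--GM inequality or a first-order optimality condition: the function $f(BW)=a\,BW+b/BW$ is strictly convex on $(0,\infty)$, so setting $f'(BW)=a-b/BW^{2}=0$ gives the unique minimiser $BW^{max}=\sqrt{b/a}$. This is where the physical trade-off becomes explicit: a larger bandwidth raises the thermal noise floor linearly, while simultaneously sharpening the range-limited resolution cell and thus suppressing clutter; the optimum balances these two competing effects.

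The remaining step is purely algebraic simplification. Dividing $b$ by $a$ cancels $\gamma$ and one factor of $\Delta\Theta$ against $\twotheta = \Delta\Theta^{2}$, giving
\begin{align*}
\frac{b}{a} = \frac{\rho c \kappa^{2} \rcsc \rcst P_{tx} A_0 \hxt}{2\,\Delta\Theta\,(\kappa_m+\sqrt{\kappa_m^2-L^2})(\rcst+\gamma\rcsc)K_B T_s}.
\end{align*}
Then I would invoke the same Monte Carlo justifications used inside the proof of Theorem~3 (namely $R_{min}\approx \kappa$ and $\cos^{2}\beta\approx \cos^{2}\beta^{max}$, with $\sin^{2}\beta^{max}\approx L^{2}/\kappa^{2}$ in the co-site regime) to replace the geometric factor $\kappa^{2}/(\kappa_m+\sqrt{\kappa_m^{2}-L^{2}})$ by $\kappa/(1-L^{2}/\kappa^{2})$. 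Taking the square root yields \eqref{eq:corr5}.

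The main obstacle is not the optimisation itself, which is textbook convex minimisation, but rather keeping the geometric bookkeeping consistent with the approximations made earlier in Theorem~3 so that the final expression lands exactly in the stated form; in particular one must be careful that the $\cos^{2}\beta$ factor from $A_{c_r}$ and the $R_{min}\approx \kappa$ substitution are applied in the same regime as in the theorem, otherwise a spurious $\kappa_m$ dependence would survive.
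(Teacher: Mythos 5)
Your proposal is correct and follows essentially the same route as the paper, which simply states that the corollary follows by setting the first derivative of $\Pdc$ in \eqref{eq:theorem3} with respect to $BW$ to zero after substituting $N_s = K_B T_s BW$ and $\Delta\tau = 1/BW$. Your additional observations (strict convexity of $a\,BW + b/BW$ guaranteeing a unique optimum, and the need to apply the same geometric approximations $R_{min}\approx\kappa$ and $\cos^2\beta\approx\cos^2\beta^{max}$ consistently with Theorem~3 to reach the stated form) merely flesh out details the paper leaves implicit.
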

The corollary is proved by equating the first order derivative of $\Pdc$ in \eqref{eq:theorem3} with respect to $BW$ to 0.
\section{Results}
\label{sec:Results}
In this section, we present the results from the theorem and its corollaries. They are validated with Monte Carlo simulations. The experimental set up for the Monte Carlo simulations consist of a two-dimensional Cartesian space spanning $X:-100m:+100m$ and $Y:-100m:+100m$ with the bistatic radar located at $(\pm 2.5m,0)$. For each realization of the Monte Carlo simulation, we assume a single target to be located at bistatic  range $\kappa$ at a random realization of $\theta_t$ drawn from a uniform distribution between $[0,2\pi)$ and a radar cross-section drawn from the exponential distribution with a mean $\rcst$. Using the bistatic radar geometry concepts presented in Section \ref{sec:Theory}, we compute $R_{tx}, R_{rx}$ and $\beta$. Then for a given transmitter power, radar antenna beamwidths and wavelength, we compute the received power at the radar for every realization assuming LOS conditions using \eqref{eq:TgtSignal}. Then we model the discrete clutter scatterers as a PPP - where for each realization of the Monte Carlo simulation, the number of clutter scatterers are drawn from a Poisson distribution of mean ($\rho\times$ total area of the Cartesian space) and the position is a uniform random variable in the two-dimensional space. The cross-section of each clutter scatterer is drawn from an exponential distribution of mean $\rcsc$. We determine if the scatterer is within the beamwidth limited resolution cell and then compute the total clutter returns using \eqref{eq:CluttSignal}.
Then for a given system noise temperature and bandwidth, we compute the noise power of the radar system and the SCNR of each realization. We determine the $\Pdc$ by summing the total number of realizations for which the SCNR is above the threshold. The radar, target and clutter parameters that have been used in the simulations are summarized in Table.\ref{tab:param}.
\begin{table}[htbp]
    \centering
    \caption{\small Radar, target and clutter parameters}
    \begin{tabular}{p{1.5cm}|p{1cm}||p{1.5cm}|p{1.5cm}}
    \hline \hline
         Parameter & Values & Parameter & Values\\
    \hline \hline
         $L$ & 5m & $\rcst$ & $1m^2$ \\
         $P_{tx}$ & 10W & 
         $\theta_t$ & $\mathcal{U}[0,2\pi)$ \\ 
         $\rcsc$ & $1m^2$ & $T_s$ & 300K \\
         $\Delta \Theta$ & $5^{\circ}$ & $BW$ & 2GHz\\
         $\lambda$ & 5mm & $\rho$ & $0.001/m^2$\\
        \hline \hline
    \end{tabular}
    \label{tab:param}
    \vspace{-0.1cm}
\end{table}
Figure.\ref{fig:PdcvsKappa_NLCLNC} shows the natural logarithm of $\Pdc$ as a function of $\kappa$ under noise limited and clutter limited conditions. In the noise limited conditions, $\rho$ is set to zero while $T_s$ is set at 300K. We observe that $ln(\Pdc)$ decays at the rate of the fourth power of $\kappa$. On the other hand, when $T_s$ is set at 0 and $\rho$ is $0.001/m^2$, then $ln(\Pdc)$ decays at the rate of the third power of $\kappa$. When both noise and clutter conditions exist, then the performance follows the clutter response till $\tilde{\kappa}$ after which it follows the noise response. We use corollary \ref{corr:corr1} to compute $\tilde{\kappa}$ and it is verified in the figure. 
\begin{figure}[ht]
     \begin{subfigure}[b]{0.49\textwidth}
         \centering
         \includegraphics[width=2.5in,height=2in]{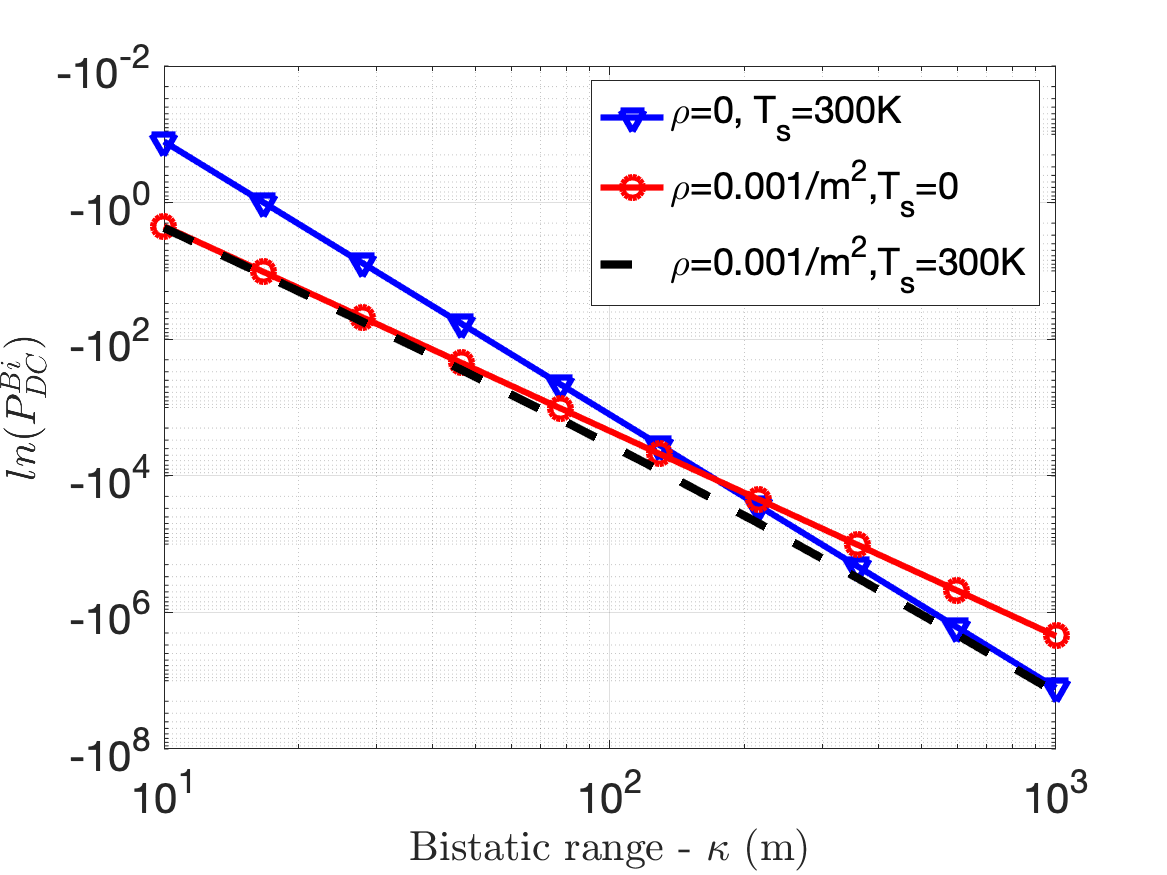}
         \caption{$ln(\Pdc)$ vs. $ln(\kappa)$ for varying $\rho$}
         \label{fig:PdcvsKappa_NLCLNC}
     \end{subfigure}
     \hfill
     \begin{subfigure}[b]{0.49\textwidth}
         \centering
         \includegraphics[width=2.5in,height=2in]{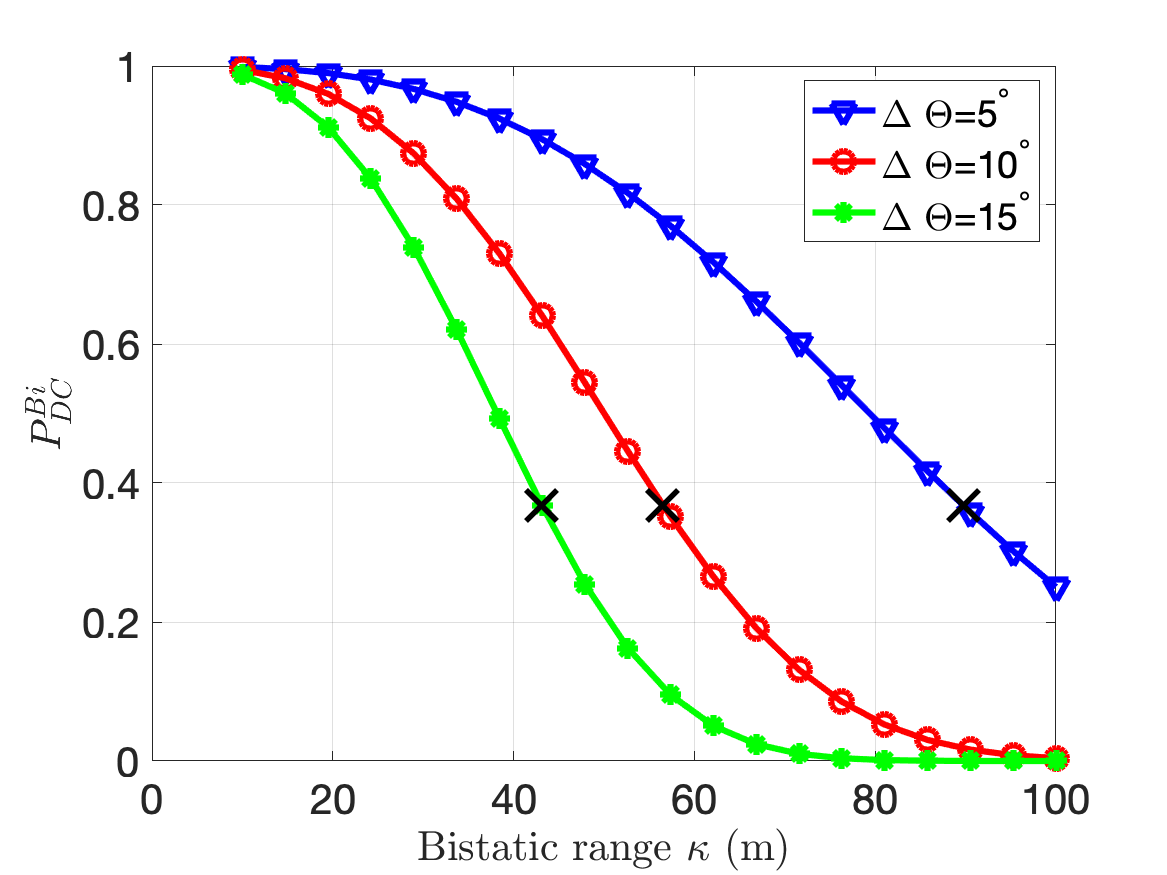}
         \caption{$\Pdc$ vs. $\kappa$ for varying $\Delta \Theta$ }
         \label{fig:PdcvsKappa_Beamwidth}
     \end{subfigure}
     \\
     \begin{subfigure}[b]{0.49\textwidth}
         \centering
         \includegraphics[width=2.5in,height=2in]{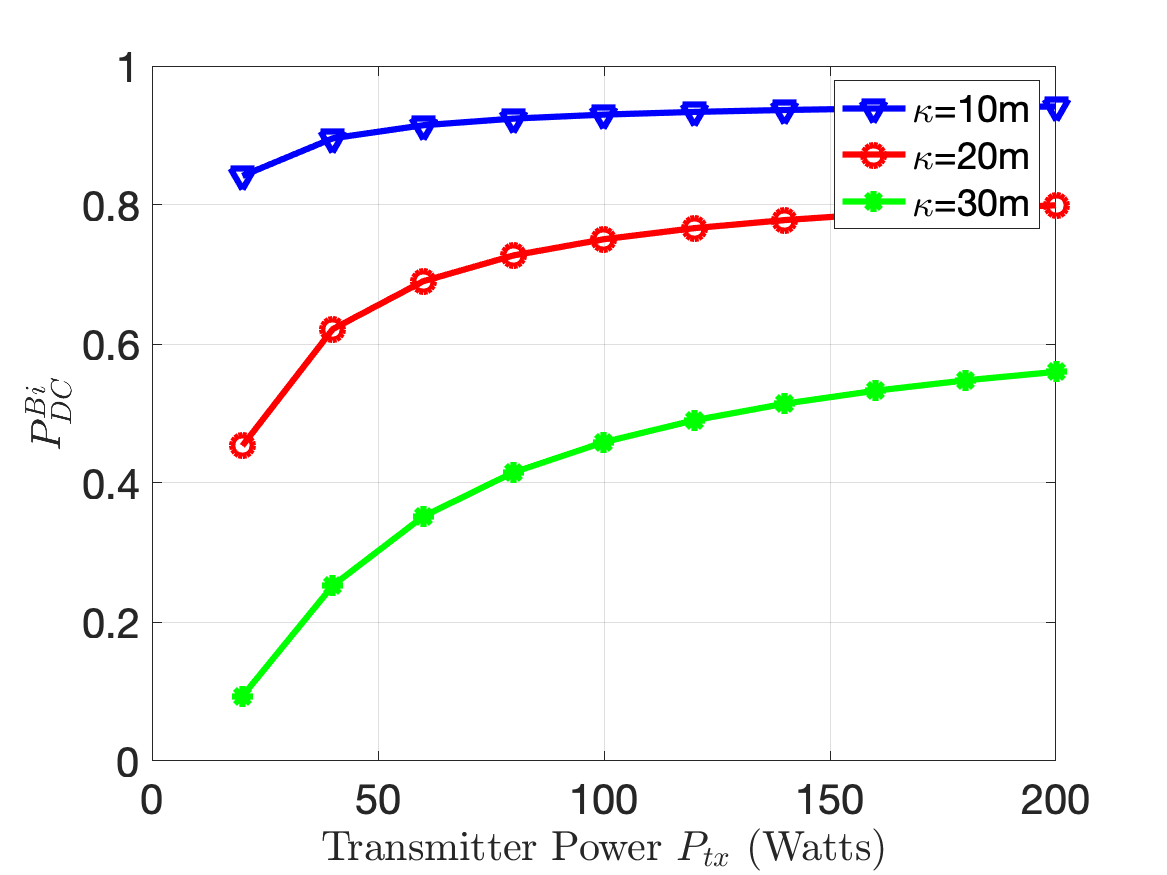}
         \caption{$\Pdc$ v. $P_{tx}$ for varying $\kappa$}
         \label{fig:PdcvsPtx_Kappa}
     \end{subfigure}
     \hfill
          \begin{subfigure}[b]{0.49\textwidth}
         \centering
         \includegraphics[width=2.5in,height=2in]{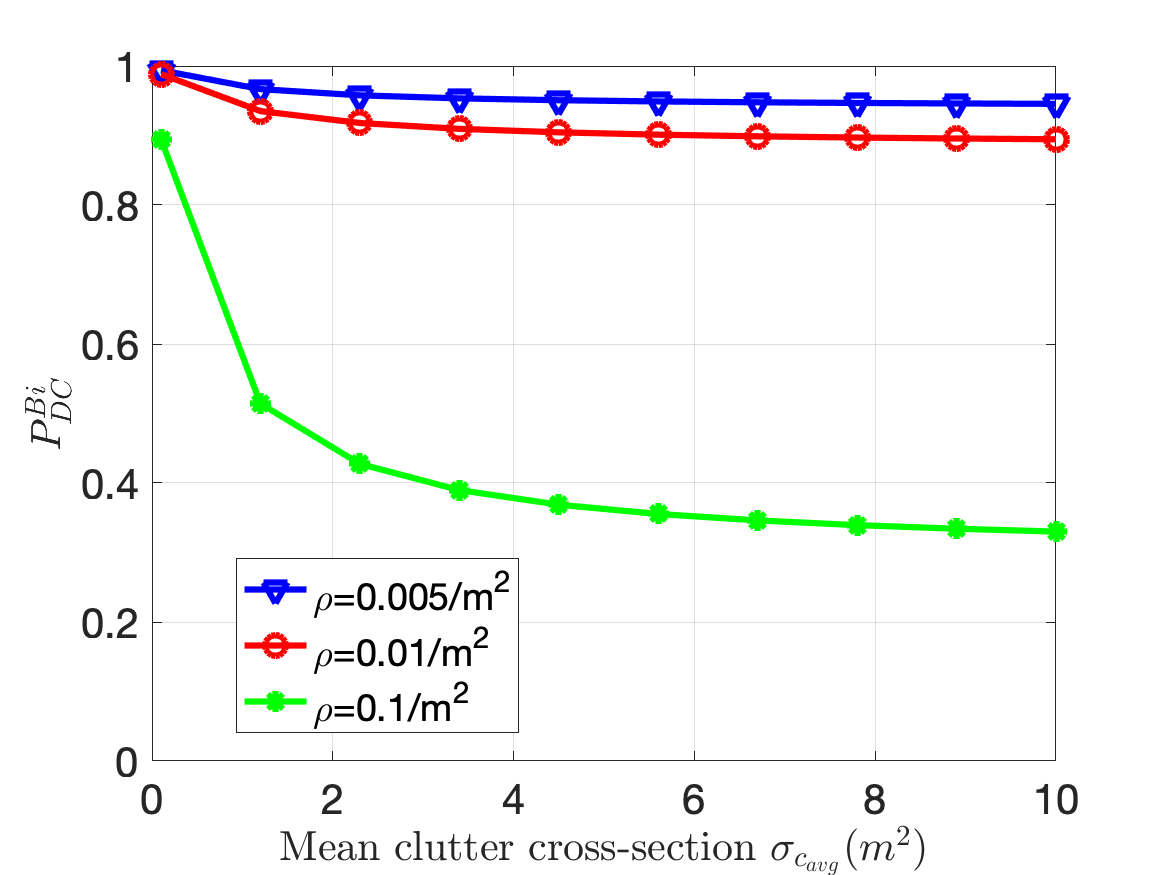}
         \caption{$\Pdc$ vs. $\rcsc$ for varying $\rho$}
         \label{fig:PdcvsSigmac_Rho}
     \end{subfigure}
     \\
     \begin{subfigure}[b]{0.49\textwidth}
         \centering
         \includegraphics[width=2.5in,height=2in]{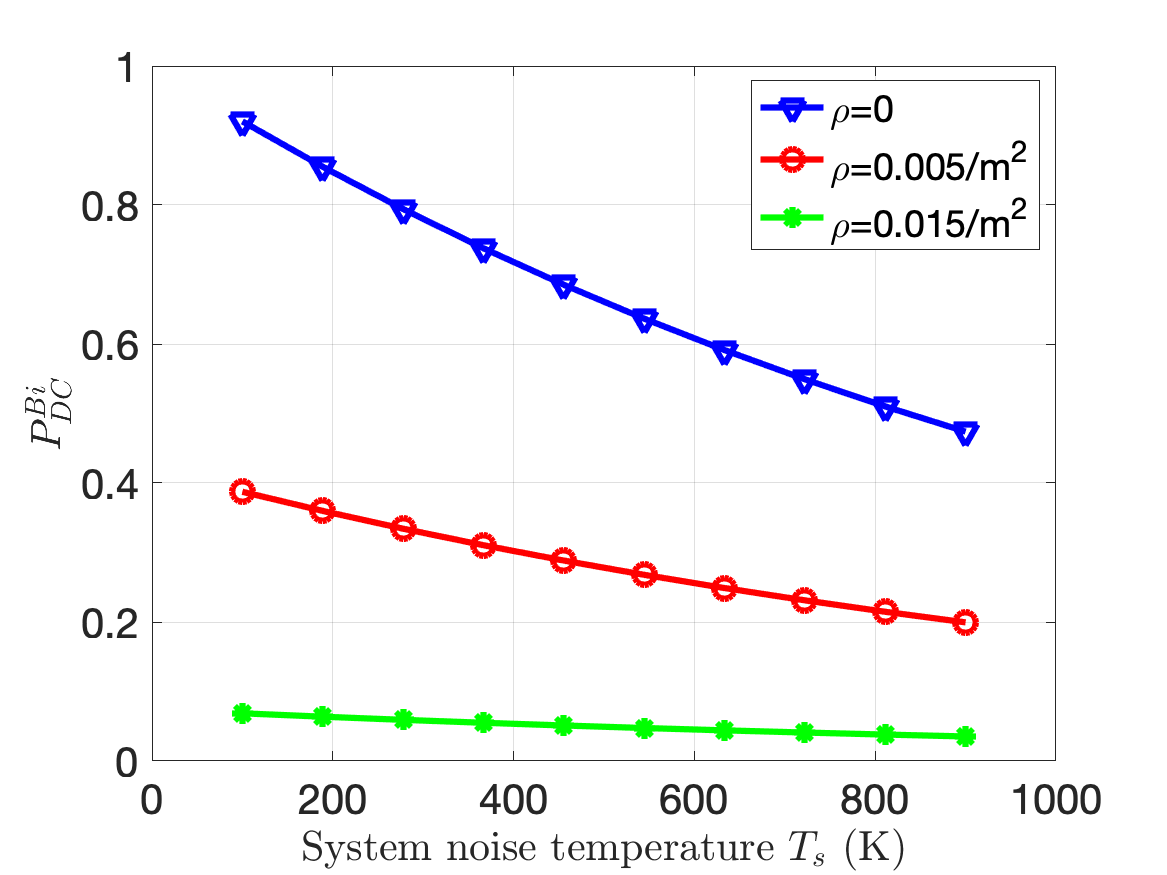}
         \caption{$\Pdc$ vs. $T_s$ for varying $\rho$}
         \label{fig:PdcvsTs_Rho}
     \end{subfigure}
     \hfill
          \begin{subfigure}[b]{0.49\textwidth}
         \centering
         \includegraphics[width=2.5in,height=2in]{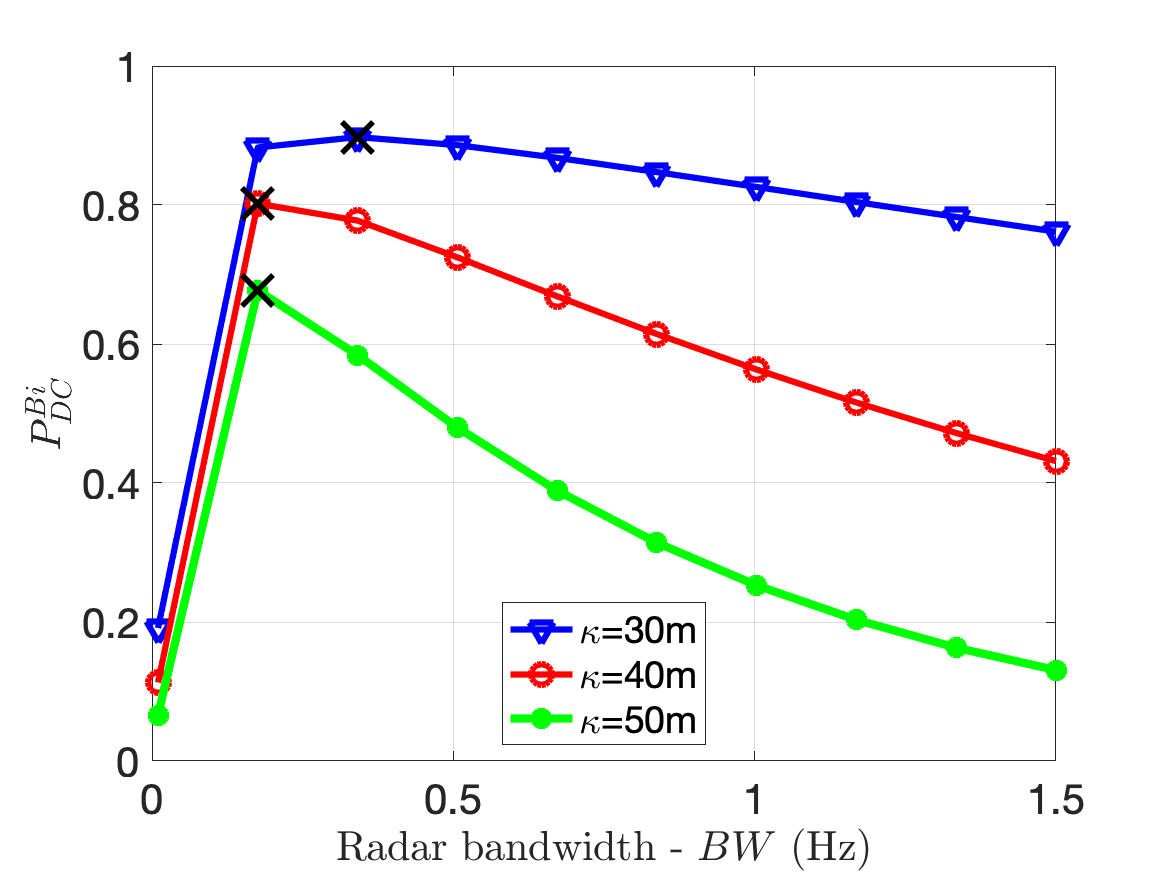}
         \caption{$\Pdc$ vs. $BW$ for varying $\kappa$}
         \label{fig:PdcvsBW_Kappa}
     \end{subfigure}
     \hfill
        \caption{Bistatic radar detection coverage probability ($\Pdc$) as a function of radar, target and clutter parameters.}
        \label{fig:three graphs}
        \vspace{-0.5cm}
\end{figure}
Next, we consider the effect of $P_{tx}$ on the radar detection performance in Fig.\ref{fig:PdcvsPtx_Kappa}. We compute the $\Pdc$ for three different values of $\kappa$. For each case, we observe that the $\Pdc$ improves with increase in $P_{tx}$ under the noise limited conditions. However, beyond a certain maximum value of the $P_{tx}$, there is no further significant improvement in the detection performance due to the high clutter returns. The maximum value of the $P_{tx}$ at which this occurs is computed from corollary \ref{corr:corr2} and corroborates the results shown in the figure.
Next, we study the effect of the clutter cross-section and the clutter density. We fix $\rcst$ at $1m^2$ while $\rcsc$ is varied from $0.1m^2$ to $10m^2$. We study the variation for three different values of $\rho$. We observe that the detection performance improves with increase in $\rcsc$ till $\rcsc>\rcst$. This is consistent with corollary \ref{corr:corr4}.
However, the performance deteriorates significantly with increase in $\rho$. In other words, the radar performance is greatly affected by the \emph{number} rather than the \emph{strength} of the clutter scatterers. 
Next, we examine the impact of the radar performance for different antenna beamwidths in Fig.\ref{fig:PdcvsKappa_Beamwidth}. We assume that beamwidth for both transmitter and receiver antennas are equal to $\Delta \Theta$.
We observe the $\Pdc$ fall with increase in $\kappa$ and for increase in $\Delta \Theta$. The increase in radar gain is beneficial in both the noise limited and clutter limited scenarios. In the noise limited scenario, the increase in gain results in stronger returns from the target. In the clutter limited scenario, the reduced beamwidth results in smaller clutter resolution cells resulting in weaker clutter returns. The view graph also indicates the values of $\overline{\kappa}$ at which the $\Pdc$ falls to 36.7\% of the maximum value corroborating with corollary \ref{corr:corr3}. We can assume that at this bistatic range, the radar system can be approximated to monostatic system due to high $\kappa/L$.
Finally, we study the effect of $T_s$ on the radar detection performance for different values of $\rho$. As anticipated, the detection performance falls for increase in $T_s$ and increase in $\rho$. However, the slopes of the three curves in the view graph also indicates that for greater values of $\rho$, the change in the radar detection performance is less significantly impacted by increase in $T_s$. This is because for greater clutter, the radar operates in the clutter limited scenario and is independent of $T_s$.
All the results that have been discussed so far are for the beamwidth limited clutter. Several of these trends also apply to the range limited clutter and hence not presented due to space constraints. Instead, we show the impact of radar bandwidth on the detection performance in Fig.\ref{fig:PdcvsBW_Kappa} for different values of $\kappa$. The result shows that the performance improves with increase in $BW$ due to reduction in $A_{c_r}$ till it reaches a maximum after which the detection performance deteriorates due to increase in noise. The optimal bandwidth is indicated in the view graph and corroborates with corollary \ref{corr:corr5}.
\section{Conclusion}
\label{sec:Conclusion}
We use stochastic geometry formulations to analyze bistatic radar detection performance through a metric - $\Pdc$. This metric quantifies the average area of the radar field-of-view over which the SCNR is above a predefined threshold. It is a function of radar parameters such as transmitted power, antenna gains, pulse width and noise as well as target and clutter parameters. Through this metric, we obtain key system insights including the bistatic range at which the radar transitions from noise to clutter limited conditions; and approximates to monostatic behavior; as well as the maximum transmitted power and bandwidth for realizing peak detection performance. Due to the flexibility afforded by the analytical framework, it can be used for The results are validated with Monte Carlo simulations. 
\bibliographystyle{ieeetran}
\bibliography{main}

\end{document}